\newtheorem{theorem}{Theorem}
\newtheorem{proposition}[theorem]{Proposition}
\newtheorem{lemma}[theorem]{Lemma}
\newtheorem{corollary}[theorem]{Corollary}
\theoremstyle{remark}
\newtheorem*{note}{Note}
\numberwithin{equation}{section}
\begin{document}

\title[Elliptic Ruijsenaars operators on bounded partitions]
{Elliptic Ruijsenaars difference operators on bounded partitions}

\author{Jan Felipe  van Diejen}

\address{
Instituto de Matem\'aticas, Universidad de Talca,
Casilla 747, Talca, Chile}

\email{diejen@inst-mat.utalca.cl}

\author{Tam\'as G\"orbe}

\address{School of Mathematics, University of Leeds, Leeds LS2 9JT,
UK}

\email{T.Gorbe@leeds.ac.uk}

\subjclass[2010]{Primary: 42C30; Secondary: 33E30, 81Q35, 81Q80}
\keywords{integrable quantum mechanics on a lattice, elliptic Ruijsenaars system, eigenfunctions, Macdonald polynomials.}

\date{May 2021}

\begin{abstract} 
By means of a truncation condition on the parameters,
the elliptic Ruijsenaars difference operators are restricted onto a finite lattice of points encoded by bounded partitions.
A corresponding orthogonal basis of joint eigenfunctions is constructed in terms of polynomials on the joint spectrum. In the trigonometric limit, this recovers
the diagonalization of the truncated Macdonald difference operators by a finite-dimensional basis of Macdonald polynomials.
\end{abstract}

\maketitle

\section{Introduction}\label{sec1}
Ruijsenaars' commuting difference operators constitute the quantum integrals for a relativistic deformation of the elliptic Calogero-Moser-Sutherland system
\cite{rui:complete}. For integral values of the coupling parameter, the spectral problem for these difference operators has been fruitfully analyzed
within the framework of the algebraic Bethe Ansatz
 \cite{bil:algebraic,hou-sas-yan:eigenvalues,kom:essential,kom:ruijsenaars}; the underlying solutions of 
the Yang-Baxter equation incorporating Ruijsenaars' difference operators within the algebraic Bethe Ansatz formalism had been developed for this purpose in
 \cite{has:ruijsenaars,kom-hik:quantum,fel-var:elliptic}. For arbitrary positive values of the coupling parameter, an orthogonal basis of eigenfunctions for
the Ruijsenaars difference operators can be generated by means of a Cauchy-type Hilbert-Schmidt kernel 
 \cite{rui:hilbert}. Very recently, remarkably explicit formulas for these eigenfunctions were presented in \cite{lan-nou-shi:construction}
 (cf. also \cite{lan:explicit} and references therein for a direct construction of the eigenfunctions from the Hilbert-Schmidt kernel 
 in the nonrelativistic limit).   It is expected that a complete solution of the eigenvalue problem for the elliptic Ruijsenaars difference operators gives rise to an intricate elliptic counterpart 
  \cite{eti-kir:affine,lan-nou-shi:construction,mir-mor-zen:duality} of Macdonald's 
ubiquitous theory of symmetric orthogonal polynomials \cite[Chapter VI]{mac:symmetric}.

Motivated by the state of the art sketched above,  we will restrict the elliptic Ruijsenaars difference operators in the present note to functions
supported on a finite lattice in the center-of-mass configuration space (inside the Weyl alcove of the $\mathfrak{sl}(n+1;\mathbb{C})$ Lie algebra). The corresponding classical integrable particle system has a compact phase space given by the complex projective space and has been studied in \cite{feh-gor:trigonometric}.
In the  trigonometric  limit,  the lattice quantum particle model of interest was investigated in \cite{die-vin:quantum,gor-hal:quantization} as  the quantization  of a corresponding classical integrable system from \cite{rui:action,feh-klu:new}. 
The hyperbolic counterpart of our quantum model involves particles moving on an infinite lattice subject to a dynamics
 that is governed by a scattering matrix that factorizes into two-particle contributions \cite{rui:factorized}; when placed in an integrability-preserving external field,
 extra factors encoding one-particle contributions emerge in the scattering matrix \cite{die-ems:spectrum}. At the elliptic level of present concern, a detailed mathematical study of the quantum eigenvalue problem for the two-particle Ruijsenaars difference operator on the finite lattice was performed recently in \cite{die-gor:elliptic}.
 
Specifically, below we will adjust the real period of the elliptic functions in terms of the positive coupling parameter so as to
truncate the elliptic Ruijsenaars operators. This entails a corresponding commuting system of discrete difference operators, which are normal in a Hilbert space of functions supported in a finite lattice of shifted dominant weights associated with the $\mathfrak{sl}(n+1;\mathbb{C})$ Lie algebra.  With the aid of a standard labelling of the pertinent dominant weights
by means of (bounded) partitions, we build  an orthogonal basis
of joint eigenfunctions in terms of polynomials evaluated on the spectrum; the polynomials in question are uniquely determined by a recurrence stemming from the eigenvalue equations. In this approach, the spectral theorem for commuting normal operators in finite dimension entails the orthogonality and Pieri rules for these polynomials.
In the trigonometric limit, our construction recovers the unitary diagonalization of finitely truncated Macdonald operators by means of Macdonald polynomials from \cite{die-vin:quantum}.

The material is organized as follows. In Section \ref{sec2} the elliptic Ruijsenaars quantum lattice model restricted to bounded partitions is presented;  details on how to retrieve this lattice model from Ruijsenaars' commuting difference operators by discretization are supplied in Appendix \ref{app} (at the end of this note).
Section \ref{sec3}  promotes the corresponding discrete elliptic Ruijsenaars operators to commuting normal operators in an appropriate Hilbert space.
 An orthogonal basis of joint eigenfunctions given by polynomials evaluated on the spectrum is constructed in Section \ref{sec4}.
 In  Section \ref{sec5} we compare the trigonometric limit of the present construction with the corresponding diagonalization in terms of Macdonald polynomials stemming from \cite{die-vin:quantum}. This comparison reveals some salient features of the joint spectrum at the elliptic level, which we briefly highlight in the form of an epilogue in Section \ref{sec6}.

 \begin{note}
 Throughout  elliptic functions will be expressed in terms of the following rescaled Jacobi theta function
 \begin{subequations}
 \begin{equation}\label{scaled-theta}
[z]=[z;p]=\dfrac{\vartheta_1(\frac{\alpha}{2}z ;p)}{\frac{\alpha}{2}\vartheta'_1(0;p)}\qquad (z\in \mathbb{C},\, \alpha >0, \ 0<p<1),
\end{equation}
with
\begin{align}\label{theta}
 \vartheta_1(z;p)&=2\sum_{l\geq 0} (-1)^l p^{(l+\frac{1}{2})^2}\sin(2l+1)z,\\
&=2p^{1/4}\sin(z)\prod_{l\geq 1} (1-p^{2l})(1-2p^{2l}\cos(2z)+p^{4l} ). \nonumber
\end{align}
\end{subequations}
Hence $\vartheta^\prime_1(0;p)= 2p^{1/4} (p^2;p^2)_\infty^3$, where
\begin{equation}\label{q-Pochhammer}
(z;p)_\infty=\lim_{k\to \infty} (z;p)_k\quad\text{with}\quad  (z;p)_k=\prod_{0\leq l <k} (1-zp^l)
\end{equation}
(and the convention $(z;p)_0=1$).
The relation to the Weierstrass sigma function associated with the period  lattice $\Omega=2\omega_1 \mathbb{Z}+2\omega_2 \mathbb{Z}$ reads
(cf. e.g. \cite[\S 23.6(i)]{olv-loz-boi-cla:nist}):
\begin{equation}\label{jw-conversion}
[z]= \sigma (z) e^{-\tfrac{\zeta (\omega_1)}{2\omega_1}z^2}, 
\end{equation}
where $\alpha=\frac{\pi}{\omega_1}$, $p=e^{i\pi \tau}$ with $\tau= \frac{\omega_2}{\omega_1} $, and $\zeta(z)=\frac{\sigma^\prime(z)}{\sigma(z)}$.
 \end{note}

\section{Lattice Ruijsenaars models on partitions}\label{sec2}

\subsection{Discrete Ruijsenaars operators on partitions}
Let us recall that a \emph{partition} 
\begin{equation*}
\lambda=(\lambda_1,\lambda_2,\lambda_3,\ldots ,\lambda_\ell)=(\lambda_1,\lambda_2,\lambda_3,\ldots ,\lambda_\ell,0,\ldots )
\end{equation*}
 of \emph{weight} $|\lambda|=\sum_{j\in\mathbb{N}} \lambda_j$
consists of a weakly decreasing sequence of nonnegative integers of which only a finite number are allowed to be positive.  These positive terms are called parts and their number gives the \emph{length}
$\ell=\ell (\lambda)= | \{ j\in\mathbb{N} \mid \lambda_j >0\}| $  of the partition (where $| \cdot |$ refers to the cardinality of the set in question).
Following standard conventions
one writes  for two partitions  that $\lambda\subset\mu$ if $\lambda_j\leq\mu_j$ for all $j\in\mathbb{N}$; in this situation $\mu$ and $\lambda$ are said to differ by a \emph{vertical $r$-strip} $\theta$ if $|\mu |=|\lambda|+r$ and $\theta_j=\mu_j-\lambda_j\in \{ 0,1\}$ for all $j\in\mathbb{N}$. 
It is important to emphasize that a vertical $r$-strip $\theta$ is only a partition if $\theta=1^r$, where
\begin{equation*}
m^r=(\underbrace{m,\ldots,m}_{r})
\end{equation*}
refers to the rectangular partition with $r$ \emph{parts} of \emph{size} $m$ each.  
For  $n\in\mathbb{N}$, we will single out the set of partitions with at most $n$ parts:
\begin{equation}\label{partitions:n}
\Lambda^{(n)}=\{  \lambda\in\Lambda \mid \ell (\lambda) \leq n \} ,
\end{equation}
where $\Lambda$ stands for the set of all partitions. Moreover, occasionally we will indicate
 for a vertical strip $\theta$  (committing a slight abuse of notation) that $|\theta|=r$ and $\theta\subset 1^n$ (even if  $\theta$ is not a partition) so as to specify that one deals with an $r$-strip for which
$\theta_j=0$ when $j>n$.

After these preliminaries, we are now in the position to define the following discrete Ruijsenaars operators acting in the space of complex lattice functions
\begin{equation*}
\mathcal{C}(\Lambda^{(n)})=\{ \lambda\stackrel{f}{\to} f_\lambda \in\mathbb{C} \mid \lambda\in\Lambda^{(n)} \}
\end{equation*}
 for $1\leq r \leq n$:
\begin{subequations}
\begin{equation}\label{Dr:a}
(D_r  f)_\lambda =  \sum_{\substack{\lambda\subset\mu\subset\lambda+1^{n+1} \\ |\mu|=|\lambda |+r}}   B_{\mu/\lambda}   f_{\underline\mu} 
\qquad (f\in\mathcal{C}(\Lambda^{(n)}),\  \lambda\in\Lambda^{(n)} ),
\end{equation}
where
\begin{equation}
\underline{\mu}=(\mu_1-\mu_{n+1},\mu_2-\mu_{n+1},\ldots,\mu_n-\mu_{n+1} )
\end{equation}
($\in\Lambda^{(n)}$) and
\begin{equation}\label{Dr:c}
B_{\mu/\lambda} = \prod_{1\leq j<k\leq n+1}  \frac{[\lambda_j-\lambda_k+\mathrm{g}(k-j+\theta_j-\theta_k)]}{[\lambda_j-\lambda_k+\mathrm{g}(k-j)]} 
\quad\text{with}\ \theta=\mu-\lambda.
\end{equation}
\end{subequations}
Here $\mathrm{g}\in \mathbb{R}$ denotes a coupling parameter that is momentarily assumed to be generic such  that 
\begin{equation}\label{gc}
\forall\lambda\in\Lambda^{(n)}: \quad  \prod_{1\leq j<k\leq n+1}{\textstyle  \sin\frac{\alpha}{2}\big( \lambda_j-\lambda_k+\mathrm{g}(k-j)\bigr) } \neq 0
\end{equation}
 (which is the case e.g. if we pick $\mathrm{g}\in\mathbb{R}$ such that
$j\mathrm{g}\not\in \mathbb{Z}+\frac{2\pi}{\alpha}\mathbb{Z}$ for $j=1,\ldots ,n$). Notice that the genericity condition in \eqref{gc} ensures that the denominator
of $B_{\mu/\lambda} $ \eqref{Dr:c} does not vanish (cf. Eqs. \eqref{scaled-theta}, \eqref{theta}).

\begin{proposition}[Commutativity]\label{integrability:prp}
The discrete Ruijsenaars operators $D_1,\ldots ,D_n$ \eqref{Dr:a}--\eqref{Dr:c} commute in $\mathcal{C}(\Lambda^{(n)})$.
\end{proposition}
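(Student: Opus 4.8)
The plan is to deduce commutativity from that of Ruijsenaars' elliptic difference operators on the ambient $(n+1)$-variable configuration space \cite{rui:complete}, following the discretization route detailed in Appendix \ref{app}. The first move is to lift the operators from $\mathfrak{sl}(n+1)$ to $\mathfrak{gl}(n+1)$: for any $\lambda \in \mathbb{Z}^{n+1}$ and any $0/1$-vector $\theta$ with $|\theta| = r$, the coefficient $B_{(\lambda + \theta)/\lambda}$ of \eqref{Dr:c} still makes sense (under the evident analogue of the genericity condition \eqref{gc}), yielding operators $\widehat{D}_r f_\lambda = \sum_{|\theta| = r,\, \theta \subset 1^{n+1}} B_{(\lambda + \theta)/\lambda}\, f_{\lambda + \theta}$ on $\mathcal{C}(\mathbb{Z}^{n+1})$. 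Setting $x_j = \lambda_j - \mathrm{g}\, j$, so that $x_j - x_k = \lambda_j - \lambda_k + \mathrm{g}(k-j)$, and using that a factor of \eqref{Dr:c} with $\theta_j = \theta_k$ reduces to $1$ together with the oddness $[-z] = -[z]$, one recognizes $\widehat{D}_r$ as an elliptic Ruijsenaars difference operator (with unit shift step $x_j \mapsto x_j + 1$, i.e.\ $\lambda_j \mapsto \lambda_j + 1$) evaluated along the affine lattice $\{(\lambda_j - \mathrm{g}\, j)_j : \lambda \in \mathbb{Z}^{n+1}\}$. Since those operators are well known to commute for generic coupling \cite{rui:complete}, restricting that identity to the lattice gives $[\widehat{D}_r, \widehat{D}_s] = 0$ on $\mathcal{C}(\mathbb{Z}^{n+1})$.

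Next I would descend back to partitions, using two observations. First, if $\lambda$ is a dominant weight but $\lambda + \theta$ is not --- which forces $\lambda_j = \lambda_{j+1}$ while $\theta_j = 0$, $\theta_{j+1} = 1$ for some $j$ --- then the factor of $B_{(\lambda+\theta)/\lambda}$ indexed by $(j, j+1)$ has numerator $[\mathrm{g}(1 + \theta_j - \theta_{j+1})] = [0] = 0$; hence $\widehat{D}_r$ preserves the cone of dominant weights and, restricted there, is supported precisely on the vertical-strip moves $\lambda \subset \mu \subset \lambda + 1^{n+1}$ of \eqref{Dr:a}. Second, since $B_{\mu/\lambda}$ depends on $\lambda$ only through the differences $\lambda_j - \lambda_k$, the operator $\widehat{D}_r$ commutes with the diagonal translation $\tau\colon f_\lambda \mapsto f_{\lambda + 1^{n+1}}$, hence preserves the subspace of $\tau$-invariant functions. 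As each $\tau$-orbit of a dominant weight contains a unique partition with vanishing $(n+1)$-st part, namely $\underline{\mu}$, the $\tau$-invariant functions on dominant weights are identified with the elements of $\mathcal{C}(\Lambda^{(n)})$, and under this identification the restriction of $\widehat{D}_r$ is exactly $D_r$ of \eqref{Dr:a}--\eqref{Dr:c}. Commutativity passes through the restriction and the identification, so $[D_r, D_s] = 0$; the genericity hypothesis \eqref{gc} enters only to ensure that every coefficient $B_{\mu/\lambda}$ occurring (for $\lambda$ a partition with at most $n$, and after shifts at most $n+1$, parts) is finite, so that all the operators in sight act honestly on the relevant function spaces.

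I expect the main obstacle to be the $\mathfrak{gl}(n+1) \to \mathfrak{sl}(n+1)$ bookkeeping in the last step: one must verify that composing two discrete operators and then applying the projection $\mu \mapsto \underline{\mu}$ yields the same sum over intermediate states as first projecting and then composing, i.e.\ that the center-of-mass quotient creates no spurious ``wrap-around'' contributions. This is exactly where the $\tau$-equivariance of the coefficients and the vanishing of the illegal vertical-strip terms are used in tandem, after which nothing beyond the elliptic function identity underlying the continuous result is required. (Alternatively one could bypass the appendix and prove $[\widehat{D}_r, \widehat{D}_s] = 0$ on $\mathbb{Z}^{n+1}$ directly, grouping the double sum by the total displacement $\theta^{(1)} + \theta^{(2)} \in \{0,1,2\}^{n+1}$ and reducing the claim to a family of theta-function identities of partial-fraction type; but this is the more laborious route.)
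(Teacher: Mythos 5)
Your proposal is correct and follows essentially the same route as the paper: commutativity is inherited from Ruijsenaars' continuous commuting difference operators by discretization, with the vanishing of the coefficients $B_{\mu/\lambda}$ at the walls of the dominant cone (Lemma \ref{discretization:lem} in Appendix \ref{app}) guaranteeing that restriction to the cone commutes with composition, so no spurious terms arise. The only cosmetic difference is that you lift to $\mathbb{Z}^{n+1}$ and then quotient by the diagonal translation $\tau$, whereas the paper works directly with the traceless shifts $\boldsymbol{\varepsilon}_J$ on the $\mathfrak{sl}(n+1;\mathbb{C})$ weight lattice and passes to partitions via the bijection \eqref{bijection}.
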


\begin{proof}
The operators $D_1, \ldots, D_n$ arise as discretizations of Ruijsenaars' commuting difference operators and therewith inherit their commutativity (cf. Appendix \ref{app} for further details).
\end{proof}

\subsection{Finite-dimensional truncation  on bounded partitions}
From now on we fix $n,m\in\mathbb{N}$ and pick
\begin{equation}\label{truncation}
\boxed{\textstyle \alpha=\frac{2\pi}{(n+1)\mathrm{g}+m}\quad\text{with}\ \mathrm{g}>0.} 
\end{equation}
Let
\begin{equation}
\Lambda^{(n,m)}=\{  \lambda\in\Lambda^{(n)}\mid \lambda\subset m^n \} .
\end{equation}
Notice that $| \Lambda^{(n,m)}|= \binom{n+m}{n}$.

\begin{lemma}[Truncation]\label{truncation:lem}
Let $\lambda\in\Lambda^{(n,m)}$  and  $\mu\in\Lambda^{(n+1)}$ such that $\theta=\mu-\lambda$ is a vertical $r$-strip.
Then---for  parameters in accordance with Eq. \eqref{truncation}---the coefficient $B_{\mu/\lambda}$ \eqref{Dr:c}  is \emph{positive} iff $\underline{\mu}\in\Lambda^{(n,m)}$
and \emph{vanishes} iff $\underline{\mu}\not\in\Lambda^{(n,m)}$.
\end{lemma}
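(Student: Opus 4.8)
The plan is to exploit the fact that, once the truncation \eqref{truncation} is imposed, \emph{every} argument of the theta functions occurring in $B_{\mu/\lambda}$ \eqref{Dr:c} is trapped in the fundamental interval $[0,\tfrac{2\pi}{\alpha}]=[0,m+(n+1)\mathrm{g}]$, on which the sign of $[z]$ is transparent. Indeed, the product representation in \eqref{theta} shows that for real $w$ the factors $(1-p^{2l})$ and $(1-2p^{2l}\cos 2w+p^{4l})\ge(1-p^{2l})^2>0$ are strictly positive when $0<p<1$, so $\vartheta_1(w;p)$ has the sign of $\sin w$; hence, recalling \eqref{scaled-theta}, $[z]>0$ for $z\in(0,\tfrac{2\pi}{\alpha})$ and $[z]=0$ precisely when $z\in\tfrac{2\pi}{\alpha}\mathbb{Z}$.

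First I would bound the arguments. Since $\lambda\in\Lambda^{(n,m)}$ one has $0\le\lambda_j-\lambda_k\le m$ for all $1\le j<k\le n+1$ (using $\lambda_{n+1}=0$), while $1\le k-j\le n$; therefore each denominator argument $\lambda_j-\lambda_k+\mathrm{g}(k-j)$ lies in $[\mathrm{g},\,m+n\mathrm{g}]\subset(0,\tfrac{2\pi}{\alpha})$, so every denominator factor is \emph{strictly positive}. For the numerator, the vertical-strip condition gives $\theta_j,\theta_k\in\{0,1\}$, so $0\le k-j+\theta_j-\theta_k\le n+1$, and the numerator argument $\lambda_j-\lambda_k+\mathrm{g}(k-j+\theta_j-\theta_k)$ lies in $[0,\,m+(n+1)\mathrm{g}]=[0,\tfrac{2\pi}{\alpha}]$; hence every numerator factor is $\ge 0$ and consequently $B_{\mu/\lambda}\ge 0$ throughout.

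It remains to pin down when $B_{\mu/\lambda}$ vanishes, i.e.\ when some numerator factor does, which by the previous paragraph happens for the index pair $(j,k)$ exactly when its argument equals $0$ or $\tfrac{2\pi}{\alpha}$. The value $0$ would force $\lambda_j=\lambda_k$ together with $k-j+\theta_j-\theta_k=0$, hence $k=j+1$, $\theta_j=0$, $\theta_{j+1}=1$ with $\lambda_j=\lambda_{j+1}$; but then $\mu_{j+1}=\lambda_j+1>\mu_j$, contradicting that $\mu$ is a partition, so this case is vacuous. The value $\tfrac{2\pi}{\alpha}=m+(n+1)\mathrm{g}$ requires both $\lambda_j-\lambda_k=m$ and $k-j+\theta_j-\theta_k=n+1$; since $k-j\le n$ and $\theta_j-\theta_k\le 1$, the second equality forces $j=1$, $k=n+1$, $\theta_1=1$, $\theta_{n+1}=0$, and then $\lambda_1-\lambda_{n+1}=\lambda_1=m$. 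Conversely, in this configuration the $(1,n+1)$-factor equals $[m+(n+1)\mathrm{g}]=[\tfrac{2\pi}{\alpha}]=0$.

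Finally I would match this with the combinatorial condition. The tuple $\underline{\mu}=(\mu_1-\mu_{n+1},\ldots,\mu_n-\mu_{n+1})$ is automatically a partition with at most $n$ parts (as $\mu$ is weakly decreasing with $\mu_n\ge\mu_{n+1}\ge 0$), so $\underline{\mu}\in\Lambda^{(n,m)}$ iff $\underline{\mu}_1=\lambda_1+\theta_1-\theta_{n+1}\le m$; since $\lambda_1\le m$, $\theta_1\le 1$, $\theta_{n+1}\ge 0$, this inequality fails precisely when $\lambda_1=m$, $\theta_1=1$, $\theta_{n+1}=0$, which is exactly the vanishing configuration found above. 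Hence $B_{\mu/\lambda}=0$ iff $\underline{\mu}\notin\Lambda^{(n,m)}$, and since $B_{\mu/\lambda}\ge 0$ this is the same as $B_{\mu/\lambda}>0$ iff $\underline{\mu}\in\Lambda^{(n,m)}$. The main obstacle I anticipate is the case analysis in the third paragraph: one must be careful that no numerator factor other than the $(1,n+1)$ one can vanish, which rests on the partition property of $\mu$ excluding the argument value $0$.
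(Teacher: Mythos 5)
Your argument is correct and follows essentially the same route as the paper's own proof: trap all theta-function arguments in $[0,\tfrac{2\pi}{\alpha}]$, observe that the denominator arguments lie strictly inside so those factors are positive, and then show that a numerator factor vanishes only at the endpoints, with the endpoint $0$ excluded by the partition property of $\mu$ and the endpoint $\tfrac{2\pi}{\alpha}$ occurring exactly when $\underline{\mu}_1=m+1$. The only difference is that you spell out the positivity of $[z]$ on $(0,\tfrac{2\pi}{\alpha})$ from the product formula, which the paper leaves implicit.
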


\begin{proof}
For $\lambda\in\Lambda^{(n,m)}$ and  $1\leq j<k\leq n+1$, one has that
\begin{equation*}
0<\mathrm{g} \leq \lambda_j-\lambda_k+(k-j)\mathrm{g} \leq m +n\mathrm{g}  < {\textstyle \frac{2\pi}{\alpha}} .
\end{equation*}
Hence, it is clear from the product formula in Eqs. \eqref{scaled-theta}, \eqref{theta} that the corresponding factors in the denominator of $B_{\mu/\lambda}$ \eqref{Dr:c} 
are all positive.  
In the same way one deduces  that
\begin{equation*}
0\leq \lambda_j-\lambda_k +\mathrm{g} (k-j+\theta_j-\theta_k)\leq  {\textstyle \frac{2\pi}{\alpha}} 
\end{equation*}
in this situation.
The corresponding factors in the numerator of  $B_{\mu/\lambda}$ \eqref{Dr:c} 
are therefore all nonnegative.  Vanishing factors in the numerator occur iff the bounds of the interval are reached.
One reaches the lower bound zero iff $k=j+1$ with  $\theta_j=0$, $\theta_{j+1}=1$ and $\lambda_j=\lambda_{j+1}$ for some $j\in \{1,\ldots , n\}$; this can only happen  iff $\mu_j-\mu_{j+1}=\theta_j-\theta_{j+1}<0$, which contradicts  our assumption that $\mu\in\Lambda^{(n+1)}$.   
The upper bound $\frac{2\pi}{\alpha}$  is reached on the other hand iff  $j=1$,  $k=n+1$, with $\theta_1=1$, $\theta_{n+1}=0$ and $\lambda_1=m$ (since $\lambda_{n+1}=0$); this happens iff
$\mu_1-\mu_{n+1}=\lambda_1+\theta_1-\theta_{n+1}=m+1$, i.e. iff $\underline{\mu}\not\in\Lambda^{(n,m)}$.  
\end{proof}

It is immediate from Lemma \ref{truncation:lem} that---for  parameters subject to the truncation condition in \eqref{truncation}---the $\binom{n+m}{n}$-dimensional subspace
\begin{equation*}
\mathcal{C}(\Lambda^{(n,m)})=\{\lambda\stackrel{f}{\to} f_\lambda\in\mathbb{C} \mid \lambda\in\Lambda^{(n,m)}\} 
\end{equation*}
of functions in $\mathcal{C}(\Lambda^{(n)})$ supported on the finite lattice $\Lambda^{(n,m)}$ of bounded partitions is stable with respect to the action of $D_r$ \eqref{Dr:a}--\eqref{Dr:c}:
\begin{equation}\label{Drf}
(D_r  f)_\lambda =  \sum_{\substack{\lambda\subset\mu\subset\lambda+1^{n+1},\,  |\mu|=|\lambda|+r \\ \text{s.t.}\, \underline\mu\in\Lambda^{(n,m)}}}   B_{\mu/\lambda}   f_{\underline\mu} 
\qquad (f \in\mathcal{C}(\Lambda^{(n,m)}),\, \lambda\in \Lambda^{(n,m)}) .
\end{equation}
Notice in this connection that Lemma \ref{truncation:lem} ensures that all coefficients 
$B_{\mu/\lambda}$  \eqref{Dr:c}  in
$D_r$ \eqref{Drf} remain regular for parameter values in the domain \eqref{truncation}; as a consequence, we will
drop the transitory genericity condition in Eq. \eqref{gc} from now on (by analytic continuation) unless explicitly stated otherwise.

\begin{corollary}[Commutativity]\label{commutativity:cor}
For  parameters in accordance with Eq. \eqref{truncation}, the finite discrete Ruijsenaars operators $D_1,\ldots ,D_n$ \eqref{Drf} commute in $\mathcal{C}(\Lambda^{(n,m)})$.
\end{corollary}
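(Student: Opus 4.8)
The plan is to transfer the commutativity from the full lattice $\mathcal{C}(\Lambda^{(n)})$ to the finite lattice $\mathcal{C}(\Lambda^{(n,m)})$ and then to discard the transitory genericity restriction by analytic continuation in the coupling $\mathrm{g}$, keeping the nome $p$ fixed and---because of \eqref{truncation}---regarding $\alpha=\tfrac{2\pi}{(n+1)\mathrm{g}+m}$ as a function of $\mathrm{g}>0$.

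For the transfer step, fix parameters obeying \eqref{truncation} with $\mathrm{g}$ generic in the sense of \eqref{gc}. For $\lambda\in\Lambda^{(n,m)}$ and any $f\in\mathcal{C}(\Lambda^{(n)})$, expanding $(D_rD_sf)_\lambda$ through \eqref{Dr:a} yields a double sum of terms $B_{\mu/\lambda}\,B_{\mu''/\underline\mu}\,f_{\underline{\mu''}}$. Since $\lambda\in\Lambda^{(n,m)}$, Lemma~\ref{truncation:lem} forces $\underline\mu\in\Lambda^{(n,m)}$ on each surviving term, and then a second application of Lemma~\ref{truncation:lem}, now with base point $\underline\mu\in\Lambda^{(n,m)}$, forces $\underline{\mu''}\in\Lambda^{(n,m)}$. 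Hence $(D_rD_sf)_\lambda$ depends on $f$ only through its restriction $g:=f|_{\Lambda^{(n,m)}}$, and in fact equals $(D_rD_sg)_\lambda$ computed with the truncated operators \eqref{Drf}; likewise for $D_sD_r$. Proposition~\ref{integrability:prp} gives $(D_rD_sf)_\lambda=(D_sD_rf)_\lambda$, whence the truncated operators \eqref{Drf} commute whenever \eqref{gc} holds along the truncation locus. (This is precisely the content of the stability of $\mathcal{C}(\Lambda^{(n,m)})$ recorded after Lemma~\ref{truncation:lem}.)

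To see that such generic values are dense, observe that a denominator in \eqref{Dr:c} can vanish for some $\lambda\in\Lambda^{(n)}$ and $1\le j<k\le n+1$ only if $\lambda_j-\lambda_k+\mathrm{g}(k-j)\in\tfrac{2\pi}{\alpha}\mathbb{Z}=\bigl((n+1)\mathrm{g}+m\bigr)\mathbb{Z}$, that is, only if $\bigl(N(n+1)-(k-j)\bigr)\mathrm{g}\in\mathbb{Z}$ for some $N\in\mathbb{Z}$; as $0<k-j\le n<n+1$, the integer $N(n+1)-(k-j)$ is nonzero, so this requires $\mathrm{g}\in\mathbb{Q}$. Hence \eqref{gc} holds along the truncation locus for every irrational $\mathrm{g}>0$, and the transfer step applies there. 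Finally, by Lemma~\ref{truncation:lem} the coefficients $B_{\mu/\lambda}$ entering \eqref{Drf} stay regular---indeed positive---for all $\mathrm{g}>0$, so every matrix entry of the commutator of the operators \eqref{Drf} is a real-analytic (in particular continuous) function of $\mathrm{g}$ on $(0,\infty)$; being zero on the dense set of irrational values it vanishes identically, which is the assertion.

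The one delicate point is this mismatch between the transitory hypothesis \eqref{gc}---which along the truncation locus does fail at certain rational values of $\mathrm{g}$---and the full parameter range \eqref{truncation}. The role of Lemma~\ref{truncation:lem} is exactly to keep the truncated coefficients regular throughout $(0,\infty)$, so that the analytic continuation from the generic locus goes through without obstruction.
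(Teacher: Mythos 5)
Your argument is correct and follows the same route the paper intends: the corollary is presented there as an immediate consequence of Proposition \ref{integrability:prp} together with the stability of $\mathcal{C}(\Lambda^{(n,m)})$ from Lemma \ref{truncation:lem}, with the genericity condition \eqref{gc} removed ``by analytic continuation'' as noted after Eq.~\eqref{Drf}. You have merely made explicit the two steps the paper leaves implicit---the compression/transfer of the commutator to the finite lattice and the density-plus-continuity argument in $\mathrm{g}$---and both are carried out correctly.
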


\section{Hilbert space}\label{sec3}

\subsection{Inner product}
We consider \emph{elliptic weights} on $\Lambda^{(n,m)}$ of the form
\begin{equation}\label{Delta}
\Delta_\lambda =\prod_{1\leq j<k\leq n+1}
{\textstyle
\frac{[\lambda_j-\lambda_k+(k-j)\mathrm{g}]}{[(k-j)\mathrm{g} ]}\frac{[(k-j+1)\mathrm{g} ]_{\lambda_j-\lambda_k}}{[1+(k-j-1)\mathrm{g} ]_{\lambda_j-\lambda_k}} } ,
\end{equation}
where $[z]_k$, $k=0,1,2,\ldots$ denotes the \emph{elliptic factorial}
\begin{equation*}
[z]_k=\prod_{0\leq l<k} [z+l]\quad\text{with}\  [z]_0=1.
\end{equation*}

\begin{lemma}[Positivity]\label{positivity:lem}
For parameters in accordance with Eq.  \eqref{truncation} and any  $\lambda\in \Lambda^{(n,m)}$,
 the elliptic weight $\Delta_\lambda$ \eqref{Delta} is \emph{positive}.
\end{lemma}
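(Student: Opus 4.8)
The plan is to reduce the claim to the elementary fact that the rescaled theta function $[z]=[z;p]$ is strictly positive for real $z$ with $0<z<\frac{2\pi}{\alpha}$, and then to verify that, for $\lambda\in\Lambda^{(n,m)}$ and parameters subject to \eqref{truncation}, every theta factor occurring in the numerator and the denominator of $\Delta_\lambda$ in \eqref{Delta} has its argument confined to this interval. Since $\Delta_\lambda$ is then a finite quotient of positive numbers, its positivity follows immediately.

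First I would record, reusing the computation already invoked in the proof of Lemma~\ref{truncation:lem}, that the product formula \eqref{scaled-theta}--\eqref{theta} yields for real $z$ the representation
\[
[z;p]=\frac{2}{\alpha}\sin\!\left(\tfrac{\alpha}{2}z\right)\prod_{l\geq 1}\frac{1-2p^{2l}\cos(\alpha z)+p^{4l}}{(1-p^{2l})^2},
\]
in which each factor $1-2p^{2l}\cos(\alpha z)+p^{4l}=(1-p^{2l})^2+2p^{2l}\bigl(1-\cos(\alpha z)\bigr)$ is strictly positive for $0<p<1$. Hence the sign of $[z;p]$ coincides with the sign of $\sin(\tfrac{\alpha}{2}z)$, which is positive precisely when $0<\tfrac{\alpha}{2}z<\pi$, i.e. when $0<z<\tfrac{2\pi}{\alpha}$.

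Next, under the truncation condition \eqref{truncation} one has $\tfrac{2\pi}{\alpha}=(n+1)\mathrm{g}+m$. Fixing $\lambda\in\Lambda^{(n,m)}$ and $1\leq j<k\leq n+1$, the weak decrease of $\lambda$ together with $\lambda_1\leq m$ and $\lambda_{n+1}=0$ gives $0\leq\lambda_j-\lambda_k\leq m$, while $1\leq k-j\leq n$. I would then go through the four types of theta factor appearing in \eqref{Delta}: the arguments $(k-j)\mathrm{g}$ and $\lambda_j-\lambda_k+(k-j)\mathrm{g}$ lie in $[\mathrm{g},n\mathrm{g}]$ and $[\mathrm{g},m+n\mathrm{g}]$ respectively; the factors of the elliptic factorial $[(k-j+1)\mathrm{g}]_{\lambda_j-\lambda_k}$ have arguments $(k-j+1)\mathrm{g}+l$ with $0\leq l<\lambda_j-\lambda_k$, hence lying in $[2\mathrm{g},(n+1)\mathrm{g}+m-1]$; and the factors of $[1+(k-j-1)\mathrm{g}]_{\lambda_j-\lambda_k}$ have arguments $1+(k-j-1)\mathrm{g}+l$ with $0\leq l<\lambda_j-\lambda_k$, hence lying in $[1,(n-1)\mathrm{g}+m]$. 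Each of these four intervals is contained strictly inside $\bigl(0,(n+1)\mathrm{g}+m\bigr)=\bigl(0,\tfrac{2\pi}{\alpha}\bigr)$, so by the first step every factor is positive and therefore $\Delta_\lambda>0$.

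The only thing requiring care is the interval bookkeeping in the last step: one must check that the extreme values of the arguments are never actually attained---in particular that the largest argument $(n+1)\mathrm{g}+m-1$ arising in the numerator elliptic factorial still falls strictly short of $\tfrac{2\pi}{\alpha}$, which is exactly where the shift by $-1$ in the summation range $0\leq l<\lambda_j-\lambda_k$ is used, and that the smallest arguments $\mathrm{g}$, $2\mathrm{g}$ and $1$ are strictly positive. Beyond this there is no real obstacle.
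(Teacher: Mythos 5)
Your proof is correct and follows essentially the same route as the paper's: both arguments reduce positivity of $\Delta_\lambda$ to the fact (read off from the product formula \eqref{scaled-theta}--\eqref{theta}) that $[z;p]>0$ for $0<z<\frac{2\pi}{\alpha}$, and then check that every argument occurring in \eqref{Delta} lies strictly in that interval under the truncation condition \eqref{truncation}. Your version merely spells out the sign analysis of the theta product and the interval bookkeeping in more detail than the paper does.
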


\begin{proof}
In view of the product formula for $[z]$ \eqref{scaled-theta}, \eqref{theta} it is immediate from the assumptions that all factors in $\Delta_\lambda$ remain positive, since
\begin{equation*}
0<(k-j)\mathrm{g} \leq \lambda_j-\lambda_k+(k-j)\mathrm{g}  <{\textstyle \frac{2\pi}{\alpha} } 
\end{equation*}
if $1\leq j<k\leq n+1$, and
\begin{equation*}
0< l+ (k-j+1)\mathrm{g}  <{\textstyle \frac{2\pi}{\alpha}},\qquad  0< l+1+ (k-j-1)\mathrm{g}  <{\textstyle \frac{2\pi}{\alpha}} 
\end{equation*}
if  $0\leq l<\lambda_j-\lambda_k$.
\end{proof}
The positivity of the weights promotes $\mathcal{C} (\Lambda^{(n,m)})$ to a $\binom{n+m}{n}$-dimensional
Hilbert space $\ell^2(\Lambda^{(n,m)},\Delta )$ via the inner product
\begin{equation}\label{inner-product}
\langle f,g\rangle_\Delta=\sum_{\lambda\in\Lambda^{(n,m)}}  f_\lambda \overline{ g_\lambda }\Delta_\lambda
\qquad \bigl( f,g\in \ell^2(\Lambda^{(n,m)},\Delta ) \bigr).
\end{equation}

\subsection{Self-adjointness}
The elliptic weights $\Delta_\lambda$ \eqref{Delta}
obey a recurrence relation governed by the coefficients $B_{\mu/\lambda}$ \eqref{Dr:c}.

\begin{lemma}[Recurrence for Elliptic Weights]\label{Delta-rec:lem}
The elliptic weights $\Delta_\lambda$ \eqref{Delta} satisfy the following recurrence relation:
\begin{equation}\label{Delta-rec}
B_{\lambda+\theta/\lambda}\Delta_\lambda=B_{\underline{\mu}+\theta^c/\underline{\mu}}\Delta_{\underline{\mu}},
\end{equation}
 for any $\lambda\in\Lambda^{(n,m)}$  and $\mu\in\Lambda^{(n+1)}$ such that $\theta=\mu-\lambda$ is a vertical $r$-strip and $\underline{\mu}\in\Lambda^{(n,m)}$. Here
$\theta^c$ denotes the vertical $(n+1-r)$-strip such that $\theta+\theta^c=1^{n+1}$.
\end{lemma}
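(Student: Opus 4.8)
The plan is to remove an overall shift of the parts so that \eqref{Delta-rec} becomes a detailed-balance relation between $\lambda$ and $\mu=\lambda+\theta$ alone, and then to verify that relation factor by factor over the pairs $1\le j<k\le n+1$.

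\emph{Reduction.} Since $\lambda$ has at most $n$ parts we have $\lambda_{n+1}=0$, so $\mu_{n+1}=\theta_{n+1}\in\{0,1\}$ and $\underline{\mu}=\mu-\mu_{n+1}1^{n+1}$ is obtained from $\mu$ by subtracting a constant from every part. Both the coefficient $B_{\nu/\kappa}$ in \eqref{Dr:c} and the weight $\Delta_\kappa$ in \eqref{Delta} depend on their partitions $\kappa,\nu$ only through the consecutive differences $\kappa_j-\kappa_k$; hence $\Delta_{\underline{\mu}}=\Delta_\mu$ and $B_{\underline{\mu}+\theta^c/\underline{\mu}}=B_{\mu+\theta^c/\mu}$. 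As $\mu+\theta^c=\lambda+\theta+\theta^c=\lambda+1^{n+1}$, the recurrence \eqref{Delta-rec} is therefore equivalent to
\begin{equation*}
B_{\mu/\lambda}\,\Delta_\lambda=B_{(\lambda+1^{n+1})/\mu}\,\Delta_\mu .
\end{equation*}

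\emph{Factorwise comparison.} Both sides are products over $1\le j<k\le n+1$ of the rescaled theta function $[\,\cdot\,]$ from \eqref{scaled-theta}. Fix such a pair and abbreviate $a=\lambda_j-\lambda_k$, $d=k-j$ and $\varepsilon=\theta_j-\theta_k$; since $\theta$ is a vertical strip one has $\varepsilon\in\{-1,0,1\}$, with $\mu_j-\mu_k=a+\varepsilon$ and $\theta^c_j-\theta^c_k=-\varepsilon$. From \eqref{Dr:c}, the $(j,k)$-factor of $B_{\mu/\lambda}$ is $[a+(d+\varepsilon)\mathrm{g}]/[a+d\mathrm{g}]$ and that of $B_{(\lambda+1^{n+1})/\mu}$ is $[a+\varepsilon+(d-\varepsilon)\mathrm{g}]/[a+\varepsilon+d\mathrm{g}]$. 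For the weights I would simplify the elliptic factorials in \eqref{Delta} by the telescoping rule $[z]_{N+1}=[z]_N[z+N]$; this yields for the $(j,k)$-factor of $\Delta_\mu/\Delta_\lambda$ the value $1$ when $\varepsilon=0$, the value $\dfrac{[a+1+d\mathrm{g}]\,[a+(d+1)\mathrm{g}]}{[a+d\mathrm{g}]\,[a+1+(d-1)\mathrm{g}]}$ when $\varepsilon=1$, and $\dfrac{[a-1+d\mathrm{g}]\,[a+(d-1)\mathrm{g}]}{[a+d\mathrm{g}]\,[a-1+(d+1)\mathrm{g}]}$ when $\varepsilon=-1$. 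In each of the three cases this is literally the same ratio of theta factors as the $(j,k)$-contribution of $B_{\mu/\lambda}/B_{(\lambda+1^{n+1})/\mu}$ — only written in a different order — which proves the displayed identity and hence the lemma.

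I do not anticipate a genuine obstacle here: once the shift is peeled off, the verification is bookkeeping with the elliptic factorials under the constraint $\theta_j-\theta_k\in\{-1,0,1\}$. The single point deserving care is well-definedness: the hypotheses $\lambda,\underline{\mu}\in\Lambda^{(n,m)}$ enter only to guarantee, via Lemmas \ref{truncation:lem} and \ref{positivity:lem}, that the denominators of $B_{\mu/\lambda}$, $B_{\underline{\mu}+\theta^c/\underline{\mu}}$ and of $\Delta_\lambda,\Delta_{\underline{\mu}}$ do not vanish, so that \eqref{Delta-rec} is a genuine relation between finite, nonzero quantities; the underlying equality of products of theta functions is a formal identity in \eqref{scaled-theta} and holds irrespective of any vanishing. (Alternatively, \eqref{Delta-rec} can be read off from the known reversibility of the Ruijsenaars weights, but the direct check above is self-contained.)
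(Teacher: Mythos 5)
Your proof is correct and follows essentially the same route as the paper's: a direct verification that the ratio $\Delta_{\underline{\mu}}/\Delta_\lambda$ produced by telescoping the elliptic factorials matches, pair by pair according to the value of $\theta_j-\theta_k\in\{-1,0,1\}$, the ratio $B_{\lambda+\theta/\lambda}/B_{\underline{\mu}+\theta^c/\underline{\mu}}$. The only cosmetic difference is that you first strip the overall shift $\mu_{n+1}1^{n+1}$ to phrase the identity between $\lambda$ and $\mu$, whereas the paper works with $\underline{\mu}$ directly (using the same translation invariance implicitly); the underlying bookkeeping is identical.
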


\begin{proof}
An elementary computation reveals that

\begin{equation*}
\begin{aligned}
\Delta_{\underline\mu}&=\prod_{1\leq j<k\leq n+1}
{\textstyle \frac{[(k-j+1)\mathrm{g} ]_{\lambda_j-\lambda_k+\theta_j-\theta_k}}{[(k-j)\mathrm{g} ]_{\lambda_j-\lambda_k+\theta_j-\theta_k}}\frac{[(k-j)\mathrm{g} +1]_{\lambda_j-\lambda_k+\theta_j-\theta_k}}{[(k-j-1)\mathrm{g} +1]_{\lambda_j-\lambda_k+\theta_j-\theta_k}} }\\
&= \Delta_\lambda
\prod_{\substack{1\leq j<k\leq n+1\\\theta_j-\theta_k=1}}
{\textstyle \frac{[\lambda_j-\lambda_k+(k-j+1)\mathrm{g} ]}{[\lambda_j-\lambda_k+(k-j-1)\mathrm{g} +1]}\frac{[\lambda_j-\lambda_k+(k-j)\mathrm{g} +1]}{[\lambda_j-\lambda_k+(k-j)\mathrm{g} ]} }\\
&\quad\times\prod_{\substack{1\leq j<k\leq  n+1\\\theta_j-\theta_k=-1}}
{\textstyle \frac{[\lambda_j-\lambda_k+(k-j-1)\mathrm{g} ]}{[\lambda_j-\lambda_k+(k-j+1)\mathrm{g} -1]}\frac{[\lambda_j-\lambda_k+(k-j)\mathrm{g} -1]}{[\lambda_j-\lambda_k+(k-j)\mathrm{g} ]} } .
\end{aligned}
\end{equation*}
Multiplication by
\begin{align*}
B_{\underline\mu+\theta^c/\underline\mu } &=
\prod_{1\leq j<k\leq n+1 }
{\textstyle \frac{[\mu_j-\mu_k+(k-j+\theta_k-\theta_j)\mathrm{g} ]}{[\mu_j-\mu_k+(k-j)\mathrm{g} ]} }
\\
&=
\prod_{\substack{1\leq j<k\leq n+1 \\ \theta_j-\theta_k=1}}
{\textstyle \frac{[\lambda_j-\lambda_k+(k-j-1)\mathrm{g} +1]}{[\lambda_j-\lambda_k+(k-j)\mathrm{g} +1]}} \prod_{\substack{1\leq j<k\leq n+1 \\ \theta_j-\theta_k=-1}}
{\textstyle \frac{[\lambda_j-\lambda_k+(k-j+1)\mathrm{g} -1]}{[\lambda_j-\lambda_k+(k-j)\mathrm{g} -1]} }
\end{align*}
yields
\begin{align*}
&\Delta_\lambda \prod_{\substack{1\leq j<k\leq n+1 \\\theta_j-\theta_k=1}}
{\textstyle \frac{[\lambda_j-\lambda_k+(k-j+1)\mathrm{g} ]}{[\lambda_j-\lambda_k+(k-j)\mathrm{g} ]} }
\prod_{\substack{1\leq j<k\leq n+1 \\\theta_j-\theta_k=-1}} 
{\textstyle  \frac{[\lambda_j-\lambda_k+(k-j-1)\mathrm{g} ]}{[\lambda_j-\lambda_k+(k-j)\mathrm{g} ]} } \\
&=\Delta_\lambda \prod_{1\leq j<k\leq n+1 }
{\textstyle \frac{[\lambda_j-\lambda_k+(k-j+\theta_j-\theta_k)\mathrm{g} ]}{[\lambda_j-\lambda_k+(k-j)\mathrm{g} ]} }
=\Delta_\lambda B_{\lambda+\theta /\lambda}.
\end{align*}
\end{proof}

With the aid of the recurrence in Lemma \ref{Delta-rec:lem}, one readily computes the adjoint of $D_r$ \eqref{Drf} in $\ell^2(\Lambda^{(n,m)},\Delta )$.
To this end it is convenient to rewrite the action of $D_r$ \eqref{Drf} in the form
\begin{equation}\label{Drf2}
(D_r f)_\lambda = \sum_{\theta\subset 1^{n+1},\,  |\theta |=r}  B_{\lambda+\theta/\lambda} f_{\underline{\lambda+\theta}}\,
\qquad (f \in \ell^2 (\Lambda^{(n,m)},\Delta),\, \lambda\in \Lambda^{(n,m)}) .
\end{equation}
In this formula the sum on the RHS is meant to be over all $r$-strips $\theta$ with $\theta_j=0$ for $j>n+1$,
where $B_{\lambda+\theta/\lambda}=0$ unless $\underline{\lambda+\theta}\in\Lambda^{(n,m)}$ by (the proof of)  Lemma \ref{truncation:lem}.

\begin{proposition}[Adjoint]\label{adjoint:prp}
For parameters in accordance with Eq. \eqref{truncation},
the operators $D_r$ and $D_{n+1-r}$ are each others adjoints in the Hilbert space $\ell^2(\Lambda^{(n,m)},\Delta )$, i.e.
\begin{equation}\label{adjoint}
\forall f,g\in   \ell^2(\Lambda^{(n,m)},\Delta ): \qquad      \langle D_rf,g\rangle_\Delta=\langle f,D_{n+1-r}g\rangle_\Delta .
\end{equation}
\end{proposition}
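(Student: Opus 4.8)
The plan is to compute the matrix element $\langle D_r f, g\rangle_\Delta$ directly from the expansion \eqref{Drf2} and rearrange the double sum so that the difference operator acts on $g$ instead. First I would write
\[
\langle D_r f, g\rangle_\Delta = \sum_{\lambda\in\Lambda^{(n,m)}}\ \sum_{\substack{\theta\subset 1^{n+1}\\ |\theta|=r}} B_{\lambda+\theta/\lambda}\, f_{\underline{\lambda+\theta}}\, \overline{g_\lambda}\, \Delta_\lambda,
\]
where, by (the proof of) Lemma \ref{truncation:lem}, the inner sum is effectively restricted to those $\theta$ with $\underline{\lambda+\theta}\in\Lambda^{(n,m)}$, since $B_{\lambda+\theta/\lambda}=0$ otherwise. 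The key observation is that the pair $(\lambda,\mu)$ with $\mu=\lambda+\theta$, $\underline{\mu}\in\Lambda^{(n,m)}$, and $\theta$ a vertical $r$-strip, is exactly the data appearing in the Recurrence for Elliptic Weights (Lemma \ref{Delta-rec:lem}): the substitution $B_{\lambda+\theta/\lambda}\Delta_\lambda = B_{\underline{\mu}+\theta^c/\underline{\mu}}\Delta_{\underline{\mu}}$ turns the summand into $f_{\underline{\mu}}\,\overline{g_\lambda}\, B_{\underline{\mu}+\theta^c/\underline{\mu}}\,\Delta_{\underline{\mu}}$, with $\theta^c$ a vertical $(n+1-r)$-strip.

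Next I would reindex the double sum by $\nu:=\underline{\mu}\in\Lambda^{(n,m)}$, treating $\nu$ as the new outer summation variable. The point to verify is that the correspondence $(\lambda,\theta)\mapsto(\nu,\theta^c)$, where $\nu=\underline{\lambda+\theta}$, is a bijection onto the set of pairs $(\nu,\eta)$ with $\nu\in\Lambda^{(n,m)}$, $\eta$ a vertical $(n+1-r)$-strip, and $\underline{\nu+\eta}\in\Lambda^{(n,m)}$ — and moreover that under this bijection $\lambda=\underline{\nu+\eta}$ and $\overline{g_\lambda}=\overline{g_{\underline{\nu+\eta}}}$. Granting this, the sum becomes
\[
\langle D_r f, g\rangle_\Delta = \sum_{\nu\in\Lambda^{(n,m)}}\ \sum_{\substack{\eta\subset 1^{n+1}\\ |\eta|=n+1-r}} f_\nu\, B_{\nu+\eta/\nu}\, \overline{g_{\underline{\nu+\eta}}}\, \Delta_\nu
= \sum_{\nu\in\Lambda^{(n,m)}} f_\nu\, \overline{(D_{n+1-r}g)_\nu}\, \Delta_\nu = \langle f, D_{n+1-r}g\rangle_\Delta,
\]
where in the last line I use the expansion \eqref{Drf2} for $D_{n+1-r}$ acting on $g$, together with the fact that the coefficients $B_{\nu+\eta/\nu}$ are real (indeed nonnegative, by Lemma \ref{truncation:lem}) so that they pass inside the complex conjugation. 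This would establish \eqref{adjoint}.

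The main obstacle is the bijection step: one must check carefully that complementing an $r$-strip $\theta$ inside a column of height $n+1$ to an $(n+1-r)$-strip $\eta=\theta^c$ and passing to the tilde-normalization is an involution at the level of admissible pairs. Concretely, starting from $\lambda\in\Lambda^{(n,m)}$ and an $r$-strip $\theta$ with $\mu=\lambda+\theta$, $\underline{\mu}\in\Lambda^{(n,m)}$, one sets $\nu=\underline{\mu}$ and $\eta=\theta^c=1^{n+1}-\theta$; one must confirm that $\nu+\eta$ again differs from a partition in $\Lambda^{(n+1)}$ by the $(n+1-r)$-strip $\eta$, that $\underline{\nu+\eta}$ equals the original $\lambda$, and that this reverses exactly the roles of $r$ and $n+1-r$ — which is precisely what the structure of Lemma \ref{Delta-rec:lem} (symmetric in $\theta\leftrightarrow\theta^c$ up to $\lambda\leftrightarrow\underline{\mu}$) is set up to guarantee. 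The remaining ingredients — reality of the $B$-coefficients and positivity of the weights — are already supplied by Lemmas \ref{truncation:lem} and \ref{positivity:lem}, so no further analytic input is needed.
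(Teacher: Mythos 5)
Your proof is correct and follows essentially the same route as the paper's: expand via \eqref{Drf2}, apply the weight recurrence of Lemma \ref{Delta-rec:lem} to trade the $r$-strip $\theta$ for its complement $\theta^c$, and reindex the double sum by $\nu=\underline{\lambda+\theta}$. The bijection ($\underline{\nu+\theta^c}=\lambda$) and the reality of the coefficients $B_{\mu/\lambda}$ that you flag are precisely the ingredients used silently in the paper's computation, and your verification of them is sound.
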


\begin{proof}
Successive manipulations hinging on Lemma \ref{truncation:lem}, Eq. \eqref{Drf2} and  Lemma \ref{Delta-rec:lem}  sustain that
\begin{align*}
&\langle D_rf,g\rangle_\Delta=\sum_{\lambda\in\Lambda^{(n,m)}}(D_rf)_\lambda \overline{g_\lambda}\Delta_\lambda = \\
&\sum_{\substack{\lambda\in\Lambda^{(n,m)}\\ \theta\subset 1^{n+1},\,  |\theta |=r}}     \left( B_{\lambda+\theta/\lambda}\, f_{\underline{\lambda+\theta}} \right)
\overline{g_\lambda }\Delta_\lambda         =\sum_{\substack{\mu\in\Lambda^{(n,m)}\\ \theta\subset 1^{n+1},\,  |\theta |=n+1-r}}    
 f_{\mu}  \left( B_{\mu+\theta/\mu}  \,  \overline{g_{\underline{\mu+\theta}} }  \right)  \Delta_\mu  \\
&=\sum_{\mu\in\Lambda^{(n,m)} }f_\mu \overline{(D_{n+1-r}g)_\mu }\Delta_\mu =\langle f,D_{n+1-r}g\rangle_\Delta .
\end{align*}
\end{proof}

\begin{corollary}[Self-adjointness]\label{self-adjoint:cor}
The discrete difference operators
\begin{subequations}
\begin{equation}
C_r={\textstyle \frac{1}{2}}(D_r+D_{n+1-r})\qquad (r=1,\ldots,\lfloor {\textstyle \frac{n+1}{2}} \rfloor ) 
\end{equation}
and
\begin{equation}
 S_r={\textstyle \frac{1}{2\mathrm{i}}}(D_r-D_{n+1-r}) \qquad (r=1,\ldots,\lfloor {\textstyle \frac{n}{2}} \rfloor ) 
\end{equation}
\end{subequations}
provide $n$ commuting Ruijsenaars operators that are self-adjoint in $\ell^2(\Lambda^{(n,m)},\Delta )$.
\end{corollary}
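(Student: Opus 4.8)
The plan is to deduce everything directly from Corollary \ref{commutativity:cor} and Proposition \ref{adjoint:prp}, using that in a finite-dimensional Hilbert space self-adjointness of an operator $A$ is nothing but the operator identity $A^\ast=A$.

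First I would record the adjoint relations. By Proposition \ref{adjoint:prp} one has $D_r^\ast=D_{n+1-r}$ in $\ell^2(\Lambda^{(n,m)},\Delta)$ for every $r\in\{1,\ldots,n\}$; note that as $r$ runs over $1,\ldots,n$ so does $n+1-r$, so all operators entering the definitions of $C_r$ and $S_r$ are members of the commuting family $D_1,\ldots,D_n$ (no $D_{n+1}$ ever occurs, since $r\leq\lfloor\frac{n+1}{2}\rfloor$ forces $n+1-r\leq n$). Exploiting that the adjoint is conjugate-linear in the scalars, I would then simply compute
\[
C_r^\ast={\textstyle\frac12}\bigl(D_r^\ast+D_{n+1-r}^\ast\bigr)={\textstyle\frac12}\bigl(D_{n+1-r}+D_r\bigr)=C_r ,
\]
and
\[
S_r^\ast=\overline{{\textstyle\frac{1}{2\mathrm{i}}}}\,\bigl(D_r^\ast-D_{n+1-r}^\ast\bigr)=-{\textstyle\frac{1}{2\mathrm{i}}}\bigl(D_{n+1-r}-D_r\bigr)={\textstyle\frac{1}{2\mathrm{i}}}\bigl(D_r-D_{n+1-r}\bigr)=S_r ,
\]
which shows that each $C_r$ and each $S_r$ is self-adjoint in $\ell^2(\Lambda^{(n,m)},\Delta)$.

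Next I would address commutativity: every $C_r$ and every $S_r$ is a complex-linear combination of the operators $D_1,\ldots,D_n$, which commute pairwise in $\mathcal{C}(\Lambda^{(n,m)})$ by Corollary \ref{commutativity:cor}; hence all the $C_r$ and $S_r$ commute with one another. To justify the assertion that these constitute \emph{$n$} Ruijsenaars operators, I would count indices: $\lfloor\frac{n+1}{2}\rfloor+\lfloor\frac{n}{2}\rfloor=n$ for every $n\geq1$. Moreover each original $D_r$ ($1\leq r\leq n$) is recoverable from this new family---writing $s=\min(r,n+1-r)$ one has $D_r=C_s\pm\mathrm{i}\,S_s$ with the appropriate sign, respectively $D_{(n+1)/2}=C_{(n+1)/2}$ when $n$ is odd---so the real span of $\{C_r\}\cup\{S_r\}$ coincides with that of $\{D_1,\ldots,D_n\}$ and the two families carry the same spectral information.

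There is no genuine obstacle here: the statement is a formal corollary of the two preceding results. The only points demanding a modicum of care are the conjugate-linearity of the adjoint in the scalar $\tfrac{1}{2\mathrm{i}}$ (which produces exactly the sign flip that makes $S_r$ rather than $-S_r$ self-adjoint), the bookkeeping showing that $n+1-r$ stays inside $\{1,\ldots,n\}$ over the stated ranges of $r$, and the elementary combinatorial identity confirming that the total number of operators produced is indeed $n$.
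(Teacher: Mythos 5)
Your proposal is correct and follows exactly the route the paper intends: Corollary \ref{self-adjoint:cor} is stated without proof as an immediate consequence of Proposition \ref{adjoint:prp} (giving $D_r^\ast=D_{n+1-r}$) and Corollary \ref{commutativity:cor}, and your computation of $C_r^\ast$ and $S_r^\ast$ together with the count $\lfloor\frac{n+1}{2}\rfloor+\lfloor\frac{n}{2}\rfloor=n$ supplies precisely the omitted details.
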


\begin{note}
If one replaces the truncation condition in Eq. \eqref{truncation} with
the genericity condition from Eq. \eqref{gc}, then the definition of the
elliptic weight $\Delta_\lambda$ \eqref{Delta} actually makes sense for any $\lambda\in\Lambda^{(n)}$ (as the zeros of the factors in the denominator are avoided).
The recurrence in Eq. \eqref{Delta-rec} holds in this situation for any $\lambda\in\Lambda^{(n)}$ and $\mu\in\Lambda^{(n+1)}$ such that $\theta=\mu-\lambda$ is a vertical $r$-strip, even though the positivity is now no longer guaranteed.
\end{note}

\section{Eigenfunctions}\label{sec4}

\subsection{Diagonalization}

The main result of this note consist of the following theorem, which describes the construction of an orthogonal basis of joint eigenfunctions for the Ruijsenaars model
on the finite lattice of bounded partitions; the (values of the) eigenfunctions  in question are expressed by means of polynomials in terms of the corresponding eigenvalues. 
To describe these polynomials we recur to following partial order on $\Lambda^{(n)}$ (which stems from the dominance ordering of the $\mathfrak{sl}(n+1;\mathbb{C})$ dominant weights via the bijection in. Eq. \eqref{bijection}):
\begin{equation}
\forall \lambda,\mu\in\Lambda^{(n)}:\quad \lambda\leq\mu \Leftrightarrow  
\sum_{1\leq j\leq r}( \lambda_j-\mu_j) - {\textstyle \frac{r ( |\lambda |-|\mu |)}{n+1}} \in\mathbb{Z}\setminus\mathbb{N}
 \ \text{for}\ r=1,\ldots,n
\end{equation}
(while $\lambda <\mu$ if $\lambda\leq\mu$ and $\lambda\neq\mu$).

\begin{theorem}[Diagonalization]\label{diagonalization:thm}
The following statements hold for parameters from the  regime in  Eq. \eqref{truncation}.

\begin{subequations}
\emph{(i)}
The discrete Ruijsenaars operators $D_1,\ldots ,D_n$ \eqref{Drf} are simultaneously diagonalized in $ \ell^2(\Lambda^{(n,m)},\Delta) $
by an orthogonal basis of joint eigenfunctions.

\emph{(ii)} Upon fixing the normalization such that its value at $\mu=0$ is equal to $1$, an element
$p(\mathbf{e})\in \ell^2(\Lambda^{(n,m)},\Delta) $ of the joint eigenbasis, satisfying
\begin{equation}\label{diagonalization:a}
D_rp(\mathbf{e})=\mathrm{e}_r  p(\mathbf{e})\quad (r=1,\ldots ,n),
\end{equation}
is uniquely determined by the corresponding eigenvalues collected in $\mathbf{e}=(\mathrm{e}_1,\ldots ,\mathrm{e}_n )$.

\emph{(iii)}   The value of the joint eigenfunction  $p (\mathbf{e})$ at $\mu\in\Lambda^{(n,m)}$ is given by
\begin{equation}\label{p}
p_\mu(\mathbf{e})= c_\mu P_\mu (\mathbf{e}) \quad\text{with}\ 
c_\mu = \prod_{1\leq j<k\leq n+1} {\textstyle \frac{[ (k-j)\mathrm{g}]_{\mu_j-\mu_k}}{[ (k-j+1)\mathrm{g}]_{\mu_j-\mu_k}} } .
\end{equation}
Here $P_\mu (\mathbf{e})$ denotes a polynomial in the eigenvalues
$\mathrm{e}_1,\ldots,\mathrm{e}_n$ of the form
\begin{equation}
P_\mu (\mathbf{e}) = \mathrm{e}_\mu+  \sum_{\nu\in\Lambda^{(n,m)},\, \nu < \mu}  u_{\mu ,\nu}\,   \mathrm{e}_\nu \quad\text{with}\
\mathrm{e}_\mu=\prod_{1\leq j\leq n} \mathrm{e}_j^{\mu_j-\mu_{j+1}} ,
\end{equation}
whose expansion coefficients $u_{\mu ,\nu}=u_{\mu ,\nu}(\mathrm{g};p)   \in\mathbb{R}$ are uniquely determined by the recurrence
\begin{equation}
P_\mu (\mathbf{e})= \mathrm{e}_r P_\lambda (\mathbf{e})-
\sum_{\substack{\lambda\subset\nu\subset\lambda+1^{n+1},\,  |\nu|=|\mu| \\ \text{s.t.}\, \underline\nu\in\Lambda^{(n,m)} \setminus \{ \mu\} }}   \psi^\prime_{\nu/\lambda} 
P_{\underline\nu} (\mathbf{e}),
\end{equation}
where $\lambda=\mu-1^r$,
\begin{equation}
r=r_\mu=\min\{ 1\leq j\leq n \mid \mu_j-\mu_{j+1} >0\} ,
\end{equation}
and
\begin{equation}\label{psi}
 \psi^\prime_{\nu/\lambda}  = \prod_{\substack{1\leq j<k\leq n+1\\ \theta_j-\theta_k=-1}} 
 {\textstyle 
  \frac{[\nu_j-\nu_k+\mathrm{g}(k-j+1)]}{[\nu_j-\nu_k+\mathrm{g}(k-j)]}
 \frac{[\lambda_j-\lambda_k+\mathrm{g}(k-j-1)]}{[\lambda_j-\lambda_k+\mathrm{g}(k-j)]}   }
\quad\text{with}\ \theta=\nu-\lambda .
\end{equation}

\emph{(iv)} The polynomials $P_\lambda(\mathbf{e})$, $\lambda\in\Lambda^{(n,m)}$  obey the following Pieri rule on the spectrum
\begin{equation}\label{pieri}
 P_{1^r} (\mathbf{e}) P_\lambda  (\mathbf{e}) = 
\sum_{\substack{\lambda\subset\mu\subset\lambda+1^{n+1},\,  |\mu|=|\lambda|+r \\ \text{s.t.}\, \underline\mu\in\Lambda^{(n,m)}}}   \psi^\prime_{\mu/\lambda} 
P_{\underline\mu}  (\mathbf{e}) \quad\text{for}\  r=1,\ldots ,n.
\end{equation}

\emph{(v)} The joint eigenfunctions  $p(\mathbf{e})$ and $p(\tilde{\mathbf{e}})$ satisfy the orthogonality relation
\begin{equation}
\langle p(\mathbf{e}),p(\tilde{\mathbf{e}} )\rangle_\Delta=0\quad \text{if}\ \mathbf{e}\neq \tilde{\mathbf{e}}.
\end{equation}
\end{subequations}
\end{theorem}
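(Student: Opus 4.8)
The plan is to deduce all five parts of the theorem from the self-adjointness structure established in Section \ref{sec3} together with a triangularity argument with respect to the partial order $\leq$. First I would observe that by Corollary \ref{self-adjoint:cor} the operators $C_r, S_r$ form a commuting family of self-adjoint operators on the finite-dimensional Hilbert space $\ell^2(\Lambda^{(n,m)},\Delta)$; the finite-dimensional spectral theorem then furnishes an orthogonal decomposition of $\ell^2(\Lambda^{(n,m)},\Delta)$ into joint eigenspaces, and since $D_r = C_r + \mathrm{i} S_r$ (up to the obvious pairing $r \leftrightarrow n+1-r$) the same basis simultaneously diagonalizes $D_1,\ldots,D_n$. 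This proves part (i) modulo the claim that the joint eigenspaces are one-dimensional, which I will get for free from the explicit triangular construction below; it also yields part (v) immediately, since eigenvectors of a self-adjoint operator (here, of the family) attached to distinct eigenvalue tuples are orthogonal.

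The core of the argument is parts (ii)--(iii): I would show that the Ansatz $p_\mu(\mathbf{e}) = c_\mu P_\mu(\mathbf{e})$ with $c_\mu$ as in \eqref{p} converts the eigenvalue equations \eqref{diagonalization:a} into the stated recurrence for $P_\mu$. The key computation is to rewrite $D_r p(\mathbf{e}) = \mathrm{e}_r p(\mathbf{e})$ at the lattice point $\lambda = \mu - 1^r$ using the form \eqref{Drf2}: the coefficient $B_{\lambda+\theta/\lambda}$ multiplied by the ratio $c_{\underline{\lambda+\theta}}/c_\lambda$ should collapse, via the truncation bounds from Lemma \ref{truncation:lem} and a theta-function telescoping, precisely into the coefficient $\psi'_{\nu/\lambda}$ of \eqref{psi} (the surviving factors being exactly those indexed by pairs with $\theta_j - \theta_k = -1$). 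Isolating the term $\nu$ with $\underline{\nu} = \mu$ — which occurs because $\lambda \subset \mu \subset \lambda + 1^{n+1}$ with $|\mu| = |\lambda| + r$ and $r = r_\mu$ forces $\theta = 1^r$, giving coefficient $1$ after the $c$-renormalization — and moving the remaining terms to the other side gives the recurrence. One then checks by induction on $|\mu|$, with $r_\mu$ chosen minimal, that this recurrence expresses $P_\mu(\mathbf{e})$ as $\mathrm{e}_\mu$ plus a combination of $\mathrm{e}_\nu$ with $\nu < \mu$: indeed $\mathrm{e}_r P_\lambda(\mathbf{e})$ contributes the leading monomial $\mathrm{e}_r \mathrm{e}_\lambda = \mathrm{e}_\mu$, while every $\underline{\nu}$ appearing on the right with $\underline{\nu}\neq\mu$ satisfies $\underline{\nu} < \mu$ in the dominance order (since $\nu$ lies strictly between $\lambda$ and $\lambda + 1^{n+1}$ in a way that lowers the partial sums); this simultaneously proves uniqueness of $P_\mu$, hence of $p(\mathbf{e})$ given $\mathbf{e}$ and the normalization $p_0(\mathbf{e}) = 1$, and shows the map $\mathbf{e} \mapsto (\mathrm{e}_\mu)_{\mu\in\Lambda^{(n,m)}}$ is injective on the joint spectrum, so each joint eigenspace is indeed one-dimensional, completing (i) and (ii).

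For part (iv), the Pieri rule, I would note that \eqref{pieri} is not an independent statement but a reformulation of the $r$-th eigenvalue equation once $P_\mu$ is known: applying $D_r$ in the form \eqref{Drf2} to $p(\mathbf{e})$ at an arbitrary $\lambda \in \Lambda^{(n,m)}$ (not merely at $\mu - 1^r$) and dividing through by $c_\lambda$ yields $\mathrm{e}_r P_\lambda(\mathbf{e}) = \sum \psi'_{\mu/\lambda} P_{\underline{\mu}}(\mathbf{e})$; it then remains to identify $\mathrm{e}_r$ with $P_{1^r}(\mathbf{e})$, which follows by specializing $\lambda = 0$ (so that $P_0 = 1$ and the only surviving term on the right is $P_{1^r}$, as the other $\underline{\mu}$ would have to exceed $1^r$ but cannot fit inside a single column — or directly from the recurrence with $\mu = 1^r$, $\lambda = 0$). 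The main obstacle I anticipate is the bookkeeping in the theta-telescoping that produces $\psi'_{\nu/\lambda}$ from $B_{\lambda+\theta/\lambda}\, c_{\underline{\lambda+\theta}}/c_\lambda$: one must carefully track which factors of the elliptic factorials in $c$ cancel against which factors of $B$, separately handling the three cases $\theta_j - \theta_k \in \{1, 0, -1\}$ and using $\lambda_{n+1} = 0$ together with the truncation inequality $\lambda_1 \le m$ to see that no spurious vanishing or blow-up occurs — this is exactly the elliptic analogue of the classical Macdonald Pieri computation and should go through, but it is the one step requiring genuine care rather than formal manipulation.
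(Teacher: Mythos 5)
Your route coincides with the paper's in all essentials: part (i) from the spectral theorem for a commuting normal family (the paper invokes the normality of the $D_r$ directly via Proposition \ref{adjoint:prp} rather than passing to $C_r$, $S_r$, but this is the same argument); parts (ii)--(iii) from the gauge identity $\psi^\prime_{\mu/\lambda}=B_{\mu/\lambda}\,c_{\underline{\mu}}/c_\lambda$ and the resulting relations $\mathrm{e}_rP_\lambda=\sum\psi^\prime_{\mu/\lambda}P_{\underline{\mu}}$; part (iv) as the $\lambda$-arbitrary form of these relations combined with $P_{1^r}=\mathrm{e}_r$; and part (v) from the adjointness. The telescoping identification of $B\,c_{\underline{\mu}}/c_\lambda$ with $\psi^\prime$ and the observation $\psi^\prime_{\mu/\lambda}=1$ for $\mu=\lambda+1^{r}$ also match the paper.

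The one step that fails as written is your induction ``on $|\mu|$''. The recurrence for $P_\mu$ (with $\lambda=\mu-1^{r_\mu}$) involves terms $P_{\underline{\nu}}$ with $\nu_{n+1}=0$, for which $|\underline{\nu}|=|\nu|=|\mu|$; these are not reached by an induction whose only decreasing quantity is the weight. Appending $r_\mu$ as a secondary index does not help either: for $n=3$, $\mu=(3,1,1,0)$ one has $r_\mu=1$, $\lambda=(2,1,1,0)$, and the recurrence calls on $P_{(2,2,1)}$ with $|(2,2,1)|=|\mu|=5$ but $r_{(2,2,1)}=2>r_\mu$, so the pair $(|\cdot|,r)$ increases lexicographically. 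The paper instead inducts lexicographically on $(d_\mu,r_\mu)$ with $d_\mu=\mu_1-\mu_{n+1}$ the largest part: one checks that $d_\lambda=d_\mu-1$, that $d_{\underline{\nu}}\le d_\mu$ always, and that $d_{\underline{\nu}}=d_\mu$ forces $\underline{\nu}=\tilde{\lambda}+1^{\tilde{r}}$ with $1\le\tilde{r}<r_\mu$ (in the example, $d_{(2,2,1)}=2<3$, so it is harmless there). Alternatively, the dominance observation you already make --- every $\underline{\nu}\neq\mu$ on the right satisfies $\underline{\nu}<\mu$, which for equal weights is ordinary dominance --- would let you repair the argument by inducting on $|\mu|$ and, within each weight class, along a linear extension of dominance; but dominance alone cannot serve as the induction order, since $\lambda=\mu-1^{r_\mu}$ is not comparable to $\mu$ (their difference is a fundamental weight, not an element of the root lattice). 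Until the induction is re-founded on such a well-ordering, the uniqueness claims in (ii)--(iii) are not established.
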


\subsection{Proof of Theorem \ref{diagonalization:thm}}

\emph{(i)} By Proposition \ref{adjoint:prp}, the commuting difference operators $D_1,\ldots,D_n$ \eqref{Drf} are normal
in the $\binom{n+m}{n}$-dimensional space $\ell^2(\Lambda^{(n,m)},\Delta)$. Invoking of the spectral theorem for commuting normal operators in finite dimension
(cf. e.g. \cite[Chapter IX.15]{gan:theory} or \cite[Chapter 2.5]{hor-joh:matrix})
thus suffices to establish the existence of an orthogonal basis of joint eigenfunctions.

\emph{(ii)} \& \emph{(iii)} Let $p\in\ell^2(\Lambda^{(n,m)},\Delta)$ be a joint eigenfunction of $D_1,\ldots, D_n$, which we assume to be normalized such that $p_\lambda=1$ at $\lambda=0$.
In other words, we have that $D_rp=\mathrm{e}_rp$ for some eigenvalue $\mathrm{e}_r\in\mathbb{C}$  ($r=1,\ldots ,n$).
It is immediate from the explicit product formulas in Eqs. \eqref{Dr:c}, \eqref{p} and \eqref{psi} that
for all $\lambda\in\Lambda^{(n,m)}$ and $\lambda\subset \mu \subset \lambda+1^{n+1}$ such that $\underline{\mu}\in\Lambda^{(n,m)}$:
\begin{equation*}
\psi^\prime_{\mu/\lambda}=B_{\mu/\lambda}  \frac{c_{\underline{\mu}} }{ c_\lambda}
\end{equation*}
(where the parameter restriction \eqref{truncation} guarantees that $c_\lambda >0$, cf. the proofs of Lemmas \ref{truncation:lem} and \ref{positivity:lem}). 
The eigenvalue equations for $p$ thus give rise to the following identities for  $P_\lambda=p_\lambda/c_\lambda$ ($\lambda\in\Lambda^{(n,m)}$):
\begin{equation}\label{ev-id}
\mathrm{e}_r P_\lambda= 
\sum_{\substack{\lambda\subset\mu\subset\lambda+1^{n+1},\,  |\mu|=|\lambda|+r \\ \text{s.t.}\, \underline\mu\in\Lambda^{(n,m)}}}   \psi^\prime_{\mu/\lambda} 
P_{\underline\mu} \qquad (r=1,\ldots,n).
\end{equation}

We will now show that  these identities imply that for any $\mu\in\Lambda^{(n,m)}$ the value of $P_\mu$  can be computed uniquely (thus proving \emph{(ii)}) 
in terms of  the polynomial in the eigenvalues $\mathrm{e}_1,\ldots,\mathrm{e}_n$ generated by the recurrence from \emph{(iii)}. 
To this end we perform lexicographical induction in $(d_\mu,r_\mu)$,  where $d_\mu=\mu_1-\mu_{n+1}$ refers to the degree and (recall)
$
r_\mu=\min\{ 1\leq j\leq n \mid \mu_j-\mu_{j+1} >0\}
$ denotes the minimal column size (with the convention that $r_0=0$),
starting from the trivial case that $d_\mu=0$ governed by the initial condition ($d_\mu=0\Rightarrow \mu=0$, so $P_\mu=\mathrm{e}_\mu=1$ in this trivial situation).
Assuming now $d_\mu >0$,  we can write $\mu=\lambda +1^r $ with $r=r_\mu >0$ and  $\lambda\in\Lambda^{(n,m)}$, which implies that $d_\lambda=d_\mu-1$ and $\psi^\prime_{\mu/\lambda}=1$. The induction hypothesis now ensures that
on the LHS of the $r$th relation in  Eq. \eqref{ev-id} the product
$\mathrm{e}_rP_\lambda$ expands as $\mathrm{e}_r\mathrm{e}_\lambda=\mathrm{e}_\mu$ plus a linear combination of monomials of the form
$\mathrm{e}_r\mathrm{e}_\nu=\mathrm{e}_{\nu+1^r}$ with $\nu<\lambda$, i.e. $\nu+1^r<\mu$; the coefficients in this expansion stem from $P_\lambda$
which is generated by the recurrence from \emph{(iii)} (by the induction hypothesis).
The terms on the RHS of the $r$th  relation in Eq. \eqref{ev-id} consist on the other hand of $P_\mu$ plus a linear combination of $P_{\tilde{\mu}}$ with $d_{\tilde{\mu}}\leq d_\mu$ and $\tilde{\mu} <\mu$.
Notice that for the latter terms \emph{either} one has that $d_{\tilde{\mu}}< d_\mu$ \emph{or} one has that $d_{\tilde{\mu}}= d_\mu$ and $\tilde{\mu}=\tilde{\lambda}+1^{\tilde{r}}$ with
$\tilde{\lambda}\in\Lambda^{(n,m)}$, $d_{\tilde{\lambda}}=d_{\mu}-1$, and  $1\leq\tilde{r}<r$. In both cases
it follows from the induction hypothesis that
$P_{\tilde{\mu}}$ is generated by the recurrence from \emph{(iii)} and that
its monomial expansion is given by $\mathrm{e}_{\tilde{\mu}}$ perturbed by a linear combination
of $\mathrm{e}_{\tilde{\nu}}$ with $\tilde{\nu}<\tilde{\mu}$.
Hence, by comparing the expressions on the LHS and the RHS of the $r$th  relation in Eq. \eqref{ev-id}, we see that
$P_\mu$ obeys the recurrence  from  \emph{(iii)} and that we can
express $P_\mu$ as $\mathrm{e}_\mu$  perturbed by a linear combination of monomials $\mathrm{e}_\nu$ with $\nu<\mu$.

\emph{(iv)} The asserted Pieri formula is now immediate from  Eq. \eqref{ev-id} and the observation that $P_{1^r}=\mathrm{e}_{1^r}=\mathrm{e}_r$ for $r=1,\ldots ,n$.

\emph{(v)}
It follows from Proposition \ref{adjoint:prp} that
\begin{equation*}
\mathrm{e}_r=\frac{\langle D_r p(\mathbf{e}),p(\mathbf{e})\rangle_\Delta}{\langle p(\mathbf{e}),p(\mathbf{e})\rangle_\Delta}
=\frac{\langle p(\mathbf{e}), D_{n+1-r} p(\mathbf{e})\rangle_\Delta}{\langle p(\mathbf{e}),p(\mathbf{e})\rangle_\Delta}=\overline{\mathrm{e}}_{n+1-r} ,
\end{equation*}
and thus
\begin{equation*}
\mathrm{e}_r \langle p(\mathbf{e}),p(\tilde{\mathbf{e}} )\rangle_\Delta =
\langle D_r p(\mathbf{e}),p(\tilde{\mathbf{e}})\rangle_\Delta=\langle p(\mathbf{e}), D_{n+1-r} p(\tilde{\mathbf{e}})\rangle_\Delta=\tilde{\mathrm{e}}_r \langle p(\mathbf{e}),p(\tilde{\mathbf{e}} )\rangle_\Delta.
\end{equation*}
Since $\mathrm{e}_r\neq \tilde{\mathrm{e}}_r$ for some $r\in \{ 1,\ldots, n\}$ if $\mathbf{e}\neq\tilde{\mathbf{e}}$, the latter identity requires that
in this situation $ \langle p(\mathbf{e}),p(\tilde{\mathbf{e}} )\rangle_\Delta =0$.

\section{Trigonometric limit}\label{sec5}

\subsection{Macdonald difference operators}
From Eqs. \eqref{scaled-theta}, \eqref{theta} it is immediate that  the scaled theta function $[z;p]$ is analytic in the elliptic nome $p$ for $|p|<1$, while
$[z;0]=\frac{\sin( \alpha z/2 )}{\alpha/2}$.
At $p=0$ the operator $D_r$ \eqref{Drf} therefore simplifies to a finite-dimensional reduction of   Macdonald's difference operator \cite{mac:symmetric,mac:orthogonal}
governed by trigonometric coefficients of the form:
\begin{subequations}
\begin{equation}
B_{\mu/\lambda} = \prod_{1\leq j<k\leq n+1}  {\textstyle \frac{[\lambda_j-\lambda_k+\mathrm{g}(k-j+\theta_j-\theta_k)]_q}{[\lambda_j-\lambda_k+\mathrm{g}(k-j)]_q} }
\quad\text{with}\ \theta=\mu-\lambda ,
\end{equation}
where
\begin{equation}
{\textstyle [z]_q=\frac{\sin( \frac{\alpha z}{2})}{\sin(\frac{\alpha}{2})}= \frac{q^{\frac{z}{2}}-q^{-\frac{z}{2}}}{q^{\frac{1}{2}}-q^{-\frac{1}{2}}}  
 \quad\text{with}\quad  q=e^{\text{i}\alpha} .}
\end{equation}
\end{subequations}
For parameters given by  Eq. \eqref{truncation}, the latter commuting  difference operators are normal
in $\ell^2(\Lambda^{(n,m)},\Delta)$ with
\begin{equation}
\Delta_\lambda =\prod_{1\leq j<k\leq n+1}
{\textstyle
\frac{[\lambda_j-\lambda_k+(k-j)\mathrm{g}]_q}{[(k-j)\mathrm{g} ]_q}\frac{[(k-j+1)\mathrm{g} ]_{q,\lambda_j-\lambda_k}}{[1+(k-j-1)\mathrm{g} ]_{q,\lambda_j-\lambda_k}} } ,
\end{equation}
where
$[z]_{q,k}=\prod_{0\leq l<k} [z+l]_q$ and $ [z]_{q,0}=1$. 
Their spectral decomposition  in
 $\ell^2(\Lambda^{(n,m)},\Delta)$ by means of an orthogonal basis of joint eigenfunctions constructed in terms of Macdonald polynomials
goes back to \cite[Section 4]{die-vin:quantum}.  It is instructive to compare the eigenfunctions  in Theorem \ref{diagonalization:thm} for $p\to 0$ with the ones
from \cite{die-vin:quantum}
given by Macdonald polynomials.

\subsection{Macdonald polynomials}
For $\lambda\in\Lambda^{(n+1)}$ let $P_\lambda(z_1,\ldots ,z_{n+1};q,t)$ denote the Macdonald polynomial \cite[Chapter VI]{mac:symmetric}
with a leading monomial given by
\begin{equation}
m_\lambda (z_1,\ldots,z_{n+1})= \sum_{\nu\in S_{n+1}(\lambda)}  z_1^{\nu_1}\cdots z_{n+1}^{\nu_{n+1}} .
\end{equation}
Here the summation is over all compositions reordering the parts of $\lambda$ (i.e. we sum over the orbit of $\lambda$ with respect to the action of the 
permutation-group $S_{n+1}$ of permutations
$\sigma= { \bigl( \begin{smallmatrix}1& 2& \cdots & n+1 \\
 \sigma_1&\sigma_2&\cdots & \sigma_{n+1}
 \end{smallmatrix}\bigr)}$ on $\lambda_1,\lambda_2,\ldots,\lambda_{n+1}$). The following proposition
 computes the eigenvalues $\mathbf{e}=(\mathrm{e}_1,\ldots,\mathrm{e}_n)$
 in Theorem \ref{diagonalization:thm} explicitly for $p\to 0$ in terms of elementary symmetric polynomials and expresses the corresponding eigenfunctions $p(\mathbf{e})$ in terms of Macdonald polynomials.

\begin{proposition}[Diagonalization at $p=0$]\label{trig-diagonalization:prp}
Let $t=q^{\mathrm{g}}$ and $q=e^{\mathrm{i}\alpha}$ with $\alpha$, $\mathrm{g}$ taken from Eq. \eqref{truncation}. 
The diagonalization and orthogonality from Theorem \ref{diagonalization:thm} can then be rewritten for $p\to 0$ in the following explicit form:
\begin{subequations}
\begin{equation}\label{evM}
D_r p(\mathbf{e}_\nu)= \mathrm{e}_{r,\nu} p(\mathbf{e}_\nu) \quad (r=1,\ldots ,n,\ \nu\in\Lambda^{(n,m)})
\end{equation}
and
\begin{equation}\label{orthoM}
\langle p(\mathbf{e}_\nu),p(\mathbf{e}_{\tilde{\nu}} )\rangle_\Delta=0\quad \text{if}\ \nu \neq \tilde{\nu}\quad (\nu,\tilde{\nu}\in\Lambda^{(n,m)}).
\end{equation}
Here the eigenvalues collected in $\mathbf{e}_\nu=\bigl( \mathrm{e}_{1,\nu},\ldots ,\mathrm{e}_{n,\nu} \bigr)$ 
are  expressed explicitly in terms of the  elementary symmetric polynomials $m_{1^r}$, $r=1,\ldots ,n$:
\begin{equation}
\mathrm{e}_{r,\nu}=
q^{-r \bigl( \frac{|\nu|}{n+1} +\frac{n\mathrm{g}}{2}\bigr)}
m_{1^r} (q^{\nu_1+n\mathrm{g}},q^{\nu_2+(n-1)\mathrm{g}},\ldots, q^{\nu_n+\mathrm{g}},1) ,
\end{equation}
and the value of the corresponding joint eigenfunction
$p(\mathbf{e}_\nu )\in \ell^2(\Lambda^{(n,m)},\Delta)$  at $\mu\in\Lambda^{(n,m)}$ is given by the normalized  Macdonald polynomial:
\begin{equation}
p_\mu (\mathbf{e}_\nu) = c_\mu q^{-|\mu | \bigl( \frac{|\nu|}{n+1}+ \frac{n\mathrm{g}}{2}\bigr)}
P_\mu (q^{\nu_1+n\mathrm{g}},q^{\nu_2+(n-1)\mathrm{g}},\ldots, q^{\nu_n+\mathrm{g}},1 ;q,q^{\mathrm{g}})
\end{equation}
with
\begin{equation}
c_\mu = \prod_{1\leq j<k\leq n+1} {\textstyle \frac{[ (k-j)\mathrm{g}]_{q,\mu_j-\mu_k}}{[ (k-j+1)\mathrm{g}]_{q,\mu_j-\mu_k}} } .
\end{equation}
\end{subequations}
\end{proposition}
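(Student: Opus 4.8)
The plan is to run the $p\to 0$ limit through the recurrence that characterises the polynomials $P_\mu(\mathbf e)$ in Theorem~\ref{diagonalization:thm}(iii): I expect this recurrence to degenerate into Macdonald's Pieri rule for vertical strips evaluated on a lattice of spectral nodes, so that the joint eigenfunctions become suitably normalized Macdonald polynomials, which can then be matched with the $p(\mathbf e_\nu)$ via the uniqueness asserted in Theorem~\ref{diagonalization:thm}(ii).

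To set this up, note first that with $t=q^{\mathrm g}$ and $q=e^{\mathrm i\alpha}$ the truncation \eqref{truncation} is the root-of-unity relation $t^{n+1}q^{m}=1$, i.e.\ exactly the regime of \cite[Section~4]{die-vin:quantum} (cf.\ also \cite{mac:orthogonal}) in which the $\binom{n+m}{n}$ Macdonald polynomials $P_\mu(\,\cdot\,;q,t)$, $\mu\in\Lambda^{(n,m)}$, stay regular and obey a \emph{truncated} Pieri rule. Fix $\nu\in\Lambda^{(n,m)}$, put $c(\nu)=\tfrac{|\nu|}{n+1}+\tfrac{n\mathrm g}{2}$ and $\xi_\nu=(q^{\nu_1+n\mathrm g},\ldots,q^{\nu_n+\mathrm g},1)=(q^{\nu_1}t^{n},\ldots,q^{\nu_n}t,q^{\nu_{n+1}}t^{0})$, and consider the lattice function $\mu\mapsto q^{-|\mu|c(\nu)}P_\mu(\xi_\nu;q,t)$. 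I would substitute this into the identities \eqref{ev-id} and check that they reduce to Macdonald's Pieri rule, using: (a) at $p=0$, the coefficient $\psi'_{\mu/\lambda}$ in \eqref{psi} is Macdonald's vertical-strip Pieri coefficient $\psi'_{\mu/\lambda}(q,q^{\mathrm g})$, and it vanishes whenever $\underline\mu\notin\Lambda^{(n,m)}$ (by $\psi'_{\mu/\lambda}=B_{\mu/\lambda}\,c_{\underline\mu}/c_\lambda$, from the proof of Theorem~\ref{diagonalization:thm}(iii), together with Lemma~\ref{truncation:lem}); (b) the column-stripping identity $P_\mu(z;q,t)=(z_1\cdots z_{n+1})^{\mu_{n+1}}P_{\underline\mu}(z;q,t)$ and $e_{n+1}(\xi_\nu)=q^{(n+1)c(\nu)}$, which give $q^{-|\underline\mu|c(\nu)}P_{\underline\mu}(\xi_\nu)=q^{-|\mu|c(\nu)}P_\mu(\xi_\nu)$ since $|\mu|=|\underline\mu|+(n+1)\mu_{n+1}$; and (c) $P_{1^r}=m_{1^r}$ together with $|\mu|=|\lambda|+r$, which let one factor $q^{-|\lambda|c(\nu)}$ out of \eqref{ev-id}. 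The $r$-th relation in \eqref{ev-id} then becomes $m_{1^r}(\xi_\nu)P_\lambda(\xi_\nu)=\sum_{\mu}\psi'_{\mu/\lambda}(q,q^{\mathrm g})P_\mu(\xi_\nu)$ with the sum over $\mu$ such that $\mu/\lambda$ is a vertical $r$-strip, which is precisely Macdonald's Pieri rule once $\mathrm e_r$ is taken to be $\mathrm e_{r,\nu}:=q^{-rc(\nu)}m_{1^r}(\xi_\nu)$. Since this function equals $1$ at $\mu=0$ (there $|\mu|=0$, $P_0=1$, $c_0=1$), Theorem~\ref{diagonalization:thm}(ii) identifies it with $p(\mathbf e_\nu)$; this establishes \eqref{evM} with the stated eigenvalues, and—recalling $p_\mu=c_\mu P_\mu(\mathbf e)$ and that the $p\to 0$ limit of $c_\mu$ is as displayed—it yields $P_\mu(\mathbf e_\nu)=q^{-|\mu|c(\nu)}P_\mu(\xi_\nu;q,q^{\mathrm g})$, hence the announced formula for $p_\mu(\mathbf e_\nu)$.

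For \eqref{orthoM} I would check that $\nu\mapsto\mathbf e_\nu$ is injective on $\Lambda^{(n,m)}$: the bounds $0<\mathrm g\le\nu_j-\nu_k+(k-j)\mathrm g<\tfrac{2\pi}{\alpha}$ for $1\le j<k\le n+1$ (as in the proof of Lemma~\ref{truncation:lem}) force the nodes $q^{\nu_1+n\mathrm g},\ldots,q^{\nu_n+\mathrm g}$ to be pairwise distinct and sorted in the order prescribed by $\nu$, so that $\nu$ is recovered from the multiset of nodes and hence from $\mathbf e_\nu$. Thus \eqref{orthoM} is the instance $\mathbf e=\mathbf e_\nu$, $\tilde{\mathbf e}=\mathbf e_{\tilde\nu}$ of Theorem~\ref{diagonalization:thm}(v).

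\textbf{Main obstacle.} The conceptual content is light once \cite{die-vin:quantum} is in hand; the real work is bookkeeping—verifying that \eqref{psi} at $p=0$ is literally Macdonald's $\psi'_{\mu/\lambda}(q,q^{\mathrm g})$ in the additive parametrisation, and keeping careful track of the center-of-mass rescalings $q^{-rc(\nu)}$ and $q^{-|\mu|c(\nu)}$—together with the analytic point that the Macdonald polynomials $P_\mu$, $\mu\in\Lambda^{(n,m)}$, and their (truncated) Pieri rule survive the specialisation to the root of unity in \eqref{truncation}; the latter is exactly what \cite{die-vin:quantum} provides, and the Pieri truncation can in any case be recovered from Lemma~\ref{truncation:lem} as in (a) above.
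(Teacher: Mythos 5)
Your route is genuinely different from the paper's: the paper proves this proposition in two lines, by observing that \eqref{evM} and \eqref{orthoM} are reformulations (via the bijection \eqref{bijection}) of the eigenvalue equation and the finite orthogonality relation for Macdonald polynomials at the root of unity $t^{n+1}q^{m}=1$ established in \cite[Eqs. (4.14), (4.15b)]{die-vin:quantum}. You instead rederive the eigenvalue equation from Theorem \ref{diagonalization:thm} itself, by checking that $\mu\mapsto q^{-|\mu|c(\nu)}P_\mu(\xi_\nu;q,t)$ with $c(\nu)=\frac{|\nu|}{n+1}+\frac{n\mathrm{g}}{2}$ solves the identities \eqref{ev-id}---which at $p=0$ are the truncated vertical-strip Pieri rule---and then invoking the uniqueness of part \emph{(ii)}. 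That part of the argument is sound: the identification of \eqref{psi} at $p=0$ with Macdonald's $\psi'_{\mu/\lambda}(q,q^{\mathrm{g}})$, the column-stripping step via $e_{n+1}(\xi_\nu)=q^{(n+1)c(\nu)}$, and the normalization at $\mu=0$ all check out, modulo the regularity of the relevant Macdonald polynomials at the root of unity, which you correctly outsource to \cite{die-vin:quantum}. What your route buys is a derivation resting only on the classical Pieri rule plus the elliptic Theorem \ref{diagonalization:thm}; what it costs is that the orthogonality no longer comes for free.

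Indeed, for \eqref{orthoM} there is a genuine gap. Deducing \eqref{orthoM} from Theorem \ref{diagonalization:thm}\emph{(v)} requires $\mathbf{e}_\nu\neq\mathbf{e}_{\tilde\nu}$ whenever $\nu\neq\tilde\nu$, and this is unavoidable: if $\mathbf{e}_\nu=\mathbf{e}_{\tilde\nu}$ then part \emph{(ii)} forces $p(\mathbf{e}_\nu)=p(\mathbf{e}_{\tilde\nu})$, so the inner product equals $\langle p(\mathbf{e}_\nu),p(\mathbf{e}_\nu)\rangle_\Delta>0$. Your justification of injectivity, however, conflates two different multisets: from $\mathbf{e}_\nu$ (together with the trivial relation $e_{n+1}=1$) one recovers only the multiset of \emph{recentered} nodes $q^{\nu_j+(n+1-j)\mathrm{g}-c(\nu)}$, not of $q^{\nu_j+(n+1-j)\mathrm{g}}$, because the rescaling $q^{-c(\nu)}$ depends on the unknown $|\nu|$. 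Pairwise distinctness of the nodes for each fixed $\nu$ therefore does not rule out that two distinct bounded partitions yield node configurations differing by exactly the rotation $q^{c(\nu)-c(\tilde\nu)}$. The statement is nevertheless true, and the clean repair is an affine-alcove argument: the recentered exponents are precisely the coordinates of $\boldsymbol{\rho}_{\mathrm{g}}+\boldsymbol{\nu}$, and the inequalities from the proof of Lemma \ref{truncation:lem} place these traceless vectors in the \emph{open} alcove $\{x_1>x_2>\cdots>x_{n+1},\ x_1-x_{n+1}<\frac{2\pi}{\alpha}\}$, which is a strict fundamental domain for the affine Weyl group generated by $S_{n+1}$ and translations by $\frac{2\pi}{\alpha}$ times the root lattice; distinct points of the open alcove lie in distinct orbits, so the multisets $\{q^{(\boldsymbol{\rho}_{\mathrm{g}}+\boldsymbol{\nu})_j}\}$, and hence the vectors $\mathbf{e}_\nu$, are pairwise distinct. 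With this supplement your proof of \eqref{orthoM} goes through; alternatively one can quote the finite orthogonality of \cite{die-vin:quantum} directly, as the paper does.
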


\begin{proof}
The orthogonality \eqref{orthoM} follows from a reformulation of the finite-dimensional orthogonality relation for the Macdonald polynomials in  \cite[ Eq. (4.15b)]{die-vin:quantum}
by means of the bijection from Eq. \eqref{bijection} in the appendix below (cf. also \cite[Appendix B]{die-vin:quantum}).
The eigenvalue equation \eqref{evM} amounts in turn to a corresponding reformulation of \cite[Eq. (4.14)]{die-vin:quantum}
(cf. also Eqs. (4.10b), (4.12) and Appendix B of
\cite{die-vin:quantum}).
\end{proof}

\section{Epilogue}\label{sec6}

Because the vector of joint eigenvalues $\mathbf{e}=(\mathrm{e}_1,\ldots ,\mathrm{e}_n)$ in Theorem \ref{diagonalization:thm} is multiplicity free (by part \emph{(ii)}), Proposition \ref{trig-diagonalization:prp} entails
a natural labelling  of the corresponding basis of
joint eigenfunctions for the finite elliptic Ruijsenaars operators $D_1,\ldots,D_n$ \eqref{Drf} in terms of bounded partitions.

\begin{corollary}[Joint Spectrum]\label{spectrum:cor}
The joint spectrum in Theorem \ref{diagonalization:thm} is given by $\binom{n+m}{n}$ vectors
\begin{subequations}
\begin{equation}\label{spectrum:a}
\mathbf{e}_\nu =\bigl( \mathrm{e}_{1,\nu},\ldots ,\mathrm{e}_{n,\nu}\bigr)\in\mathbb{C}^n\quad  (\nu\in\Lambda^{(n,m)})
\end{equation}
that extend analytically to $-1<p<1$, such that
\begin{equation}
\mathrm{e}_{r,\nu} |_{p=0}=
q^{-r \bigl( \frac{|\nu|}{n+1}+\frac{n\mathrm{g}}{2}\bigr)}
m_{1^r} (q^{\nu_1+n\mathrm{g}},q^{\nu_2+(n-1)\mathrm{g}},\ldots, q^{\nu_n+\mathrm{g}},1) 
\end{equation}
$(r=1,\ldots ,n)$.
\end{subequations}
\end{corollary}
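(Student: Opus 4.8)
The plan is to regard the commuting difference operators $D_1,\dots,D_n$ \eqref{Drf} on the $N$-dimensional space $\ell^2(\Lambda^{(n,m)},\Delta)$, $N=\binom{n+m}{n}$, as a real-analytic family in the elliptic nome $p$ and to feed this into finite-dimensional analytic perturbation theory, the crucial input being the multiplicity-freeness of the joint spectrum from Theorem~\ref{diagonalization:thm}(ii). First I would observe that, although Section~\ref{sec3} and Theorem~\ref{diagonalization:thm} are stated for $0<p<1$, the underlying positivity persists for all $p\in(-1,1)$: cancelling the common factor $2p^{1/4}$ in \eqref{scaled-theta}, \eqref{theta} gives
\begin{equation*}
[z;p]=\frac{\sin(\tfrac{\alpha z}{2})}{\tfrac{\alpha}{2}}\prod_{l\geq 1}\frac{1-2p^{2l}\cos(\alpha z)+p^{4l}}{(1-p^{2l})^{2}},
\end{equation*}
and for $0<z<\tfrac{2\pi}{\alpha}$ every factor on the right is positive whenever $-1<p<1$, because $1-2p^{2l}\cos(\alpha z)+p^{4l}\geq(1-p^{2l})^{2}>0$. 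Hence Lemmas~\ref{truncation:lem} and \ref{positivity:lem}, the self-adjointness arguments of Section~\ref{sec3}, and parts (i)--(ii) of Theorem~\ref{diagonalization:thm} all remain valid throughout $-1<p<1$. Consequently, for every such $p$ the matrices $D_r(p)$ are regular, commute, are normal with respect to the positive weights $\Delta_\lambda(p)$, depend real-analytically on $p$, and their joint spectrum consists of exactly $N$ pairwise distinct vectors in $\mathbb{C}^n$.

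Next I would carry out the analytic continuation of the joint spectrum. Fixing $p_0\in(-1,1)$, distinctness of the $N$ joint eigenvalue vectors at $p_0$ lets one choose real constants $c_1,\dots,c_n$ so that $\sum_r c_r D_r(p_0)$ has simple spectrum; the contour-integral projections
\begin{equation*}
E_j(p)=\frac{1}{2\pi\mathrm{i}}\oint_{\Gamma_j}\Bigl(xI-\sum_r c_r D_r(p)\Bigr)^{-1}\,dx\qquad(j=1,\dots,N),
\end{equation*}
with $\Gamma_j$ a small circle around the $j$th eigenvalue of $\sum_r c_r D_r(p_0)$, are then rank one, mutually orthogonal and real-analytic for $p$ near $p_0$, and the local branches of the joint spectrum are $\mathrm{e}^{(j)}_r(p)=\mathrm{tr}\bigl(D_r(p)E_j(p)\bigr)$, again real-analytic near $p_0$. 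These branches are intrinsic to the operator family (the auxiliary $c_r$ only serve to separate eigenvalues locally), so branches defined around different base points agree on overlaps; and they never meet on $(-1,1)$, since a coincidence $\mathrm{e}^{(j)}(p_1)=\mathrm{e}^{(j')}(p_1)$ would produce a joint eigenvalue of multiplicity $\geq 2$ at $p_1$, contradicting the first paragraph. Because $(-1,1)$ is connected (indeed simply connected), an unbranched $N$-fold family of analytic functions over it is globally trivial, so the local germs patch into exactly $N$ globally real-analytic maps $(-1,1)\to\mathbb{C}^n$. Labelling these by $\nu\in\Lambda^{(n,m)}$ through the bijection with the $p=0$ joint spectrum furnished by Proposition~\ref{trig-diagonalization:prp} yields the desired functions $p\mapsto\mathbf{e}_\nu(p)$; and the value $\mathbf{e}_\nu|_{p=0}$ is precisely the expression in elementary symmetric polynomials computed in Proposition~\ref{trig-diagonalization:prp}.

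The product-formula positivity and the local perturbation-theoretic construction are routine, so the step I expect to be the main obstacle is the \emph{global, single-valued} gluing of the local analytic branches across the whole interval. The feature that makes it go through is the uniform multiplicity-freeness on $-1<p<1$: it rules out eigenvalue collisions and hence any branch point, so that the $N$ local germs extend without ambiguity to honest functions on $(-1,1)$. As a by-product, applying the same continuation to the rank-one projections $E_j(p)$ propagates the joint eigenfunctions analytically as well, which is what lets one attach a fixed bounded partition $\nu$ to each eigenfunction, as used in the Epilogue.
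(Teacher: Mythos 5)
Your argument is correct and follows essentially the same route as the paper, which obtains the analytic continuation of the multiplicity-free joint spectrum from the normality of $D_1,\ldots,D_n$ together with the analyticity of their coefficients and of $\Delta_\lambda$ in $p$ (citing Kato's perturbation theory) and reads off the $p=0$ values from Proposition \ref{trig-diagonalization:prp}. You merely spell out the perturbation-theoretic step via explicit Riesz projections and, usefully, make explicit the extension of the positivity lemmas to $-1<p<0$ via the product formula, which the paper leaves implicit.
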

Notice at this point that since the operators in question are normal in $\ell^2(\Lambda^{(n,m)},\Delta)$ by Proposition \ref{adjoint:prp},
the analyticity of $\mathbf{e}_\nu$ \eqref{spectrum:a} in $p\in (-1,1)$ is inherited from the analyticity
of  (the coefficients of) $D_r$ \eqref{Drf} and of $\Delta_\lambda$ \eqref{Delta} (cf. \cite[Chapter 2, Theorem 1.10]{kat:perturbation}).

It is now immediate from the orthogonality in Theorem \ref{diagonalization:thm}  that the matrix
$[ \Delta_\mu^{1/2} \hat{\Delta}_\nu^{1/2} p_\mu (\mathbf{e}_\nu )]_{\mu,\nu\in\Lambda^{(n,m)}}$ is unitary, where
\begin{equation}\label{pm}
 \hat{\Delta}_\nu= 1/\langle  p(\mathbf{e}_\nu),p(\mathbf{e}_\nu)  \rangle_\Delta \qquad (\nu\in\Lambda^{(n,m)}).
\end{equation}

\begin{corollary}[Orthogonality]\label{independence:cor}
The polynomials $P_\mu (\mathbf{e})$, $\mu\in\Lambda^{(n,m)}$ form an orthogonal basis for the $\binom{n+m}{n}$-dimensional Hilbert space of (complex)
functions  on the joint spectrum $\{ \mathbf{e}_\nu \mid \nu\in\Lambda^{(n,m)}\} $ associated with the weights $\hat{\Delta}_\nu$ \eqref{pm}:
\begin{equation}
\forall\lambda,\mu\in\Lambda^{(n,m)}:\quad
\sum_{\nu\in\Lambda^{(n,m)}}  P_\lambda (\mathbf{e}_\nu )  \overline { P_\mu (\mathbf{e}_\nu ) } \, \hat{\Delta}_\nu = 
\begin{cases}
\frac{1}{c_\lambda^2 \, \Delta_\lambda} &\text{if}\ \lambda=\mu ,\\
0& \text{if}\ \lambda\neq \mu .
\end{cases}
\end{equation}
\end{corollary}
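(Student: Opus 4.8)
The plan is to recognize Corollary \ref{independence:cor} as the dual orthogonality relation that accompanies the eigenfunction orthogonality already established in Theorem \ref{diagonalization:thm}(v). Concretely, I would first assemble the two finite matrices in play: the eigenfunction matrix $U=[\Delta_\mu^{1/2}\hat{\Delta}_\nu^{1/2}\,p_\mu(\mathbf{e}_\nu)]_{\mu,\nu\in\Lambda^{(n,m)}}$, which the paragraph preceding the statement already records as unitary (this is a direct repackaging of part \emph{(v)} together with the definition of $\hat{\Delta}_\nu$ in \eqref{pm} and the positivity of $\Delta_\mu$ from Lemma \ref{positivity:lem}), and then exploit that $UU^\ast=I$ is equivalent to $U^\ast U=I$ for a square matrix. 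The row-orthogonality $UU^\ast=I$ is $\sum_\nu \Delta_\mu^{1/2}\Delta_\lambda^{1/2}\hat{\Delta}_\nu\, p_\mu(\mathbf{e}_\nu)\overline{p_\lambda(\mathbf{e}_\nu)}=\delta_{\lambda\mu}$, which is the statement we want up to the substitution $p_\mu(\mathbf{e}_\nu)=c_\mu P_\mu(\mathbf{e}_\nu)$ from \eqref{p}; the column-orthogonality $U^\ast U=I$ is the companion fact that $\sum_\mu \Delta_\mu p_\mu(\mathbf{e}_\nu)\overline{p_\mu(\mathbf{e}_{\tilde\nu})}=\delta_{\nu\tilde\nu}/\hat{\Delta}_\nu$, i.e. the original orthogonality relation again.

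The steps in order: (1) recall from Theorem \ref{diagonalization:thm}(i) that the joint eigenbasis $\{p(\mathbf{e}_\nu)\}_{\nu\in\Lambda^{(n,m)}}$ is orthogonal in $\ell^2(\Lambda^{(n,m)},\Delta)$, and from part \emph{(ii)} that it is indexed bijectively by $\nu\in\Lambda^{(n,m)}$ since the joint spectrum is multiplicity free; (2) normalize, setting $\hat{\Delta}_\nu=1/\langle p(\mathbf{e}_\nu),p(\mathbf{e}_\nu)\rangle_\Delta$ as in \eqref{pm}, so that $\{\hat{\Delta}_\nu^{1/2}p(\mathbf{e}_\nu)\}$ is an orthonormal basis; (3) observe that the change-of-basis matrix from this orthonormal basis to the orthonormal ``coordinate'' basis $\{\Delta_\mu^{1/2}\delta_\mu\}_{\mu}$ of $\ell^2(\Lambda^{(n,m)},\Delta)$ is precisely $U$, hence $U$ is unitary; (4) transpose-conjugate: $U^\ast U=I$ gives exactly $\sum_\nu \hat{\Delta}_\nu\, \Delta_\lambda^{1/2}\Delta_\mu^{1/2}\, p_\lambda(\mathbf{e}_\nu)\overline{p_\mu(\mathbf{e}_\nu)}=\delta_{\lambda\mu}$; (5) substitute $p_\mu(\mathbf{e}_\nu)=c_\mu P_\mu(\mathbf{e}_\nu)$ with $c_\mu>0$ real (from \eqref{p} and the argument in the proof of \emph{(iii)} that $c_\lambda>0$ under \eqref{truncation}), divide through by $c_\lambda c_\mu \Delta_\lambda^{1/2}\Delta_\mu^{1/2}$, and read off $\sum_\nu P_\lambda(\mathbf{e}_\nu)\overline{P_\mu(\mathbf{e}_\nu)}\hat{\Delta}_\nu=\delta_{\lambda\mu}/(c_\lambda^2\Delta_\lambda)$; (6) conclude that the $P_\mu$, being $\binom{n+m}{n}$ pairwise orthogonal nonzero vectors in the $\binom{n+m}{n}$-dimensional space of functions on the joint spectrum with weights $\hat{\Delta}_\nu$, form an orthogonal basis of that space.

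The only genuinely non-formal ingredient is step (3)–(4): that a square unitary has unitary adjoint, i.e. that row-orthogonality of a finite square matrix automatically yields column-orthogonality. This is the classical finite-dimensional fact $UU^\ast=I\Rightarrow U^\ast U=I$ and carries no real obstacle; everything else is bookkeeping with the explicit prefactors $c_\mu$ and $\Delta_\mu$, together with the positivity statements already proved (Lemma \ref{positivity:lem} and the $c_\lambda>0$ remark in the proof of Theorem \ref{diagonalization:thm}) so that square roots and divisions are legitimate. I would also note in passing that one may reach the same conclusion without invoking unitarity of the adjoint, simply by a dimension count: the $\binom{n+m}{n}$ vectors $\{P_\mu(\mathbf{e}_\cdot)\}$ are linearly independent (an orthogonal family of nonzero vectors is independent, using only row-orthogonality together with $P_\mu(\mathbf{e}_\mu)\neq 0$, or alternatively the triangularity $P_\mu=\mathrm{e}_\mu+\sum_{\nu<\mu}u_{\mu,\nu}\mathrm{e}_\nu$ from part \emph{(iii)}), hence they span the full function space; the evaluation of the diagonal norm $\sum_\nu |P_\lambda(\mathbf{e}_\nu)|^2\hat{\Delta}_\nu=1/(c_\lambda^2\Delta_\lambda)$ then follows by substituting $\lambda=\mu$ into the unitarity relation of step (4). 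Either route closes the argument.
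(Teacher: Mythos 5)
Your proposal is correct and follows essentially the same route as the paper: the paper likewise records the unitarity of the matrix $[\Delta_\mu^{1/2}\hat{\Delta}_\nu^{1/2}p_\mu(\mathbf{e}_\nu)]_{\mu,\nu}$ as an immediate consequence of Theorem \ref{diagonalization:thm}\emph{(v)} and then reads off the corollary as the dual (row) orthogonality relation after substituting $p_\mu=c_\mu P_\mu$. Your bookkeeping of the prefactors and the positivity needed for the square roots is accurate, so nothing further is required.
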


\appendix

\section{Discretization of Ruijsenaars operators}\label{app}
In this appendix it is outlined how the lattice quantum Ruijsenaars model of Section \ref{sec2} is retrieved from Ruijsenaars' commuting difference operators by discretization.
To this end let us start by recalling that the $\mathfrak{sl}(n+1;\mathbb{C})$ Ruijsenaars operators are commuting difference operators with coefficients built from the Weierstrass $\sigma$-function
(cf. \cite{rui:complete,rui:systems}):
\begin{equation}
\boldsymbol{D}_{r,\sigma }=\sum_{\substack{J\subset\{1,\dots,n+1\}\\|J|=r}}\bigg(\prod_{\substack{j\in J\\k\notin J}}\frac{\sigma(x_j-x_k+ \mathrm{g} )}{\sigma(x_j-x_k)}\bigg) 
T_J,\quad r=1,\ldots ,n.
\end{equation}
Here $\mathrm{g} $ denotes a real coupling parameter and $T_J$ acts by translation on  complex functions $f(\boldsymbol{x})=f(x_1,\ldots,x_{n+1})$:
\begin{equation*}
(T_Jf)(\boldsymbol{x})=f(\boldsymbol{x}+\boldsymbol{\varepsilon}_J)\quad\text{with}\ \boldsymbol{\varepsilon}_J=\sum_{j\in J} \boldsymbol{\varepsilon}_j ,
\end{equation*}
where $\boldsymbol{\varepsilon}_j=\boldsymbol{e}_j-\frac{1}{n+1}(\boldsymbol{e}_1+\cdots +\boldsymbol{e}_{n+1})$ and $\boldsymbol{e}_1,\ldots ,\boldsymbol{e}_{n+1}$ refers to the standard unit basis of $\mathbb{C}^{n+1}$.
A straightforward similarity transformation governed by an appropriate  Gaussian
\begin{equation*}
\boldsymbol{D}_{r}=    c_r G(\boldsymbol{x})^{-1} \boldsymbol{D}_{r,\sigma } G(\boldsymbol{x}) ,
\end{equation*}
with
\begin{equation*}
 c_r = e^{\frac{\zeta (\omega_1)}{2\omega_1}r(n+1-r)\mathrm{g} (1-\mathrm{g} )} \quad \text{and}\quad
   G(\boldsymbol{x})=\exp\Bigl( {\textstyle -\frac{\mathrm{g} \, \zeta(\omega_1) }{2\omega_1}} \sum_{1\leq j<k\leq n+1} (x_j-x_k)^2\Bigr),
\end{equation*}
recasts these difference operators into the form (cf. Eq. \eqref{jw-conversion}):
\begin{equation}\label{Dr}
\boldsymbol{D}_{r}=\sum_{\substack{J\subset\{1,\dots,n+1\}\\|J|=r}} V_J(\boldsymbol{x})
T_J \quad \text{with}\ V_J(\boldsymbol{x})= \prod_{\substack{j\in J\\k\notin J}}\frac{[x_j-x_k+ \mathrm{g} ]}{[x_j-x_k]}  .
\end{equation}

We will now discretize the Ruijsenaars operator \eqref{Dr} on a translate of the $\mathfrak{sl}(n+1;\mathbb{C})$ dominant weight lattice
\begin{equation}\label{dw}
\boldsymbol{\Lambda}^{(n)}= \{ l_1\boldsymbol{\omega}_1+\cdots +l_n\boldsymbol{\omega}_n \mid  l_1,\ldots,l_n\in\mathbb{Z}_{\geq 0}  \} ,
\end{equation}
which is generated by the corresponding fundamental weights $\boldsymbol{\omega}_r=\boldsymbol{\varepsilon}_1+\cdots+\boldsymbol{\varepsilon}_r$ ($r=1,\ldots,n$). The pertinent translation is over a $\mathrm{g} $-deformation of the Weyl vector
\begin{equation}
\boldsymbol{ \rho}_\mathrm{g} =\mathrm{g} (\boldsymbol{\omega}_1+\cdots+\boldsymbol{\omega}_n) =\mathrm{g}   \sum_{1\leq j\leq n+1}  \bigl({\textstyle \frac{n}{2}}+1-j\bigr)\boldsymbol{e}_j .
\end{equation}
To avoid singularities stemming from the denominator of $V_J(\boldsymbol{x})$ \eqref{Dr}, it will from now on be assumed that $\mathrm{g} \in\mathbb{R}$ is generic such that
\begin{equation}\label{generic}
\forall\boldsymbol{\lambda}\in\boldsymbol{\Lambda}^{(n)}:\quad  \prod_{1\leq j <k \leq n+1}  \left.
{\textstyle  \sin\frac{\alpha}{2}(x_j-x_k) } \right|_{\boldsymbol{x}=\boldsymbol{\rho}_\mathrm{g} +\boldsymbol{\lambda}} \neq 0.
\end{equation}
The next lemma confirms that $\boldsymbol{D}_r$ \eqref{Dr} restricts in this situation to a discrete difference operator acting on lattice functions
$f:(\boldsymbol{\rho}_\mathrm{g} +\boldsymbol{\Lambda}^{(n)})\to\mathbb{C}$.

\begin{lemma}\label{discretization:lem}
For $\boldsymbol{\lambda} \in\boldsymbol{\Lambda}^{(n)}$ and $J\subset \{ 1,\ldots ,n+1\}$ one has that
\begin{equation}
V_J(\boldsymbol{\rho}_\mathrm{g} +\boldsymbol{\lambda})=0\quad\text{if}\ \boldsymbol{\lambda}+\boldsymbol{\varepsilon}_J\not\in \boldsymbol{\Lambda}^{(n)} .
\end{equation}
\end{lemma}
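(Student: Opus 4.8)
The plan is to translate the hypothesis $\boldsymbol{\lambda}+\boldsymbol{\varepsilon}_J\notin\boldsymbol{\Lambda}^{(n)}$ into an explicit combinatorial condition relating $J$ to $\boldsymbol{\lambda}$, expressed in the fundamental-weight basis, and then to read off directly a vanishing factor of the product defining $V_J(\boldsymbol{\rho}_\mathrm{g}+\boldsymbol{\lambda})$.

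First I would write $\boldsymbol{\lambda}=\sum_{r=1}^{n}l_r\boldsymbol{\omega}_r$ with $l_1,\dots,l_n\in\mathbb{Z}_{\geq 0}$ and re-expand $\boldsymbol{\varepsilon}_J$ in the same basis. From $\boldsymbol{\omega}_r=\boldsymbol{\varepsilon}_1+\dots+\boldsymbol{\varepsilon}_r$ one gets $\boldsymbol{\varepsilon}_j=\boldsymbol{\omega}_j-\boldsymbol{\omega}_{j-1}$ for $j=1,\dots,n+1$, with the conventions $\boldsymbol{\omega}_0=\boldsymbol{\omega}_{n+1}=0$ (the case $j=n+1$ being $\boldsymbol{\varepsilon}_{n+1}=-(\boldsymbol{\varepsilon}_1+\dots+\boldsymbol{\varepsilon}_n)=-\boldsymbol{\omega}_n$). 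Summing over $j\in J$, the coefficient of $\boldsymbol{\omega}_i$ in $\boldsymbol{\varepsilon}_J$ (for $i=1,\dots,n$) is $1$ if $i\in J$ and $i+1\notin J$, is $-1$ if $i\notin J$ and $i+1\in J$, and is $0$ otherwise. Hence in the $\boldsymbol{\omega}$-basis $\boldsymbol{\lambda}+\boldsymbol{\varepsilon}_J$ has integer coefficients, each lying in $\{l_i-1,l_i,l_i+1\}$, so $\boldsymbol{\lambda}+\boldsymbol{\varepsilon}_J\notin\boldsymbol{\Lambda}^{(n)}$ occurs precisely when some coefficient is negative, that is, precisely when there is an index $i\in\{1,\dots,n\}$ with
\[
l_i=0,\qquad i\notin J,\qquad i+1\in J .
\]

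For such an $i$ the pair $(j,k)=(i+1,i)$ appears in the product $\prod_{j\in J,\ k\notin J}$ defining $V_J$ in \eqref{Dr}. Setting $\boldsymbol{x}=\boldsymbol{\rho}_\mathrm{g}+\boldsymbol{\lambda}$ and using $\boldsymbol{\rho}_\mathrm{g}=\mathrm{g}(\boldsymbol{\omega}_1+\dots+\boldsymbol{\omega}_n)$ together with the fact that $(\boldsymbol{\omega}_r)_{i+1}-(\boldsymbol{\omega}_r)_i$ equals $-1$ for $r=i$ and $0$ otherwise, I would compute $x_{i+1}-x_i=-\mathrm{g}-l_i=-\mathrm{g}$, so that the corresponding factor of $V_J(\boldsymbol{x})$ is $[x_{i+1}-x_i+\mathrm{g}]/[x_{i+1}-x_i]=[0]/[-\mathrm{g}]$. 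Its numerator vanishes since $\vartheta_1(0;p)=0$ forces $[0]=0$ in \eqref{scaled-theta}--\eqref{theta}; the denominator $[-\mathrm{g}]$ — and indeed every denominator factor $[x_j-x_k]$ of $V_J$ at $\boldsymbol{x}=\boldsymbol{\rho}_\mathrm{g}+\boldsymbol{\lambda}$ — is nonzero by the genericity assumption \eqref{generic}, which via the product formula \eqref{scaled-theta}--\eqref{theta} is exactly the statement that these real-argument theta values avoid their (sine) zeros. Therefore $V_J(\boldsymbol{\rho}_\mathrm{g}+\boldsymbol{\lambda})=0$, which is the claim.

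The whole argument is essentially bookkeeping — a change of basis between the $\boldsymbol{\varepsilon}$- and $\boldsymbol{\omega}$-descriptions of the lattice — and I do not anticipate a genuine obstacle. The only step deserving a moment's care is the last one: one must be sure that the single vanishing numerator factor is not matched by a coincident zero of the denominator, producing an indeterminate $0/0$, and this is precisely what \eqref{generic} rules out.
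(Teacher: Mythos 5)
Your proof is correct and follows essentially the same route as the paper's: both reduce the hypothesis $\boldsymbol{\lambda}+\boldsymbol{\varepsilon}_J\notin\boldsymbol{\Lambda}^{(n)}$ to the existence of an index $i$ with $i\notin J$, $i+1\in J$ and $\lambda_i=\lambda_{i+1}$ (your condition $l_i=0$ in the $\boldsymbol{\omega}$-basis is exactly this), and then read off the vanishing numerator factor $[x_{i+1}-x_i+\mathrm{g}]=[0]$ of $V_J$ at $\boldsymbol{x}=\boldsymbol{\rho}_\mathrm{g}+\boldsymbol{\lambda}$. The only cosmetic difference is that you carry out the bookkeeping in the fundamental-weight basis while the paper works directly with the coordinates $\lambda_1\geq\cdots\geq\lambda_{n+1}$, and you spell out the nonvanishing of the denominator via \eqref{generic}, which the paper handles by imposing that condition just before the lemma.
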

\begin{proof}
Let us recall that the dominant cone $\boldsymbol{\Lambda}^{(n)}$ \eqref{dw}  constitutes a fundamental domain for the $\mathfrak{sl}(n+1;\mathbb{C})$ weight lattice $ \{ l_1\boldsymbol{\omega}_1+\cdots +l_n\boldsymbol{\omega}_{n} \mid  l_1,\ldots,l_{n}\in\mathbb{Z}  \} $
with respect to the action of the permutation group $S_{n+1}$ on the unit basis $\boldsymbol{e}_1,\ldots,\boldsymbol{e}_{n+1}$. More specifically, a vector $\boldsymbol{\lambda}=\sum_{1\leq j\leq n+1}\lambda_j \boldsymbol{e}_j$ in the weight lattice belongs to
$\boldsymbol{\Lambda}^{(n)}$ iff
$\lambda_1\geq\lambda_2\geq\cdots \geq \lambda_{n+1}$. Hence,  if $\boldsymbol{\lambda} \in\boldsymbol{\Lambda}^{(n)}$ and $\boldsymbol{\mu}=\boldsymbol{\lambda}+\boldsymbol{\varepsilon}_J\not\in \boldsymbol{\Lambda}^{(n)}$
then $\mu_j-\mu_{j+1}<0$ for some $j\in \{1,\ldots,n\}$, which implies that  $j\not\in J$,  $j+1\in J$ and $\lambda_j-\lambda_{j+1}=0$. We thus pick up a zero
of $V_J(\boldsymbol{x})$ at $\boldsymbol{x}=\boldsymbol{\rho}_\mathrm{g} +\boldsymbol{\lambda}$ from the factor $[x_{j+1}-x_{j}+\mathrm{g} ]$ in this situation.
\end{proof}

With the aid of Lemma \ref{discretization:lem} we see that the discretized Ruijsenaars
operator
\begin{equation}\label{Dr:discrete}
(\boldsymbol{D}_{r}f)(\boldsymbol{\rho}_\mathrm{g} +\boldsymbol{\lambda})=\sum_{\substack{J\subset\{1,\dots,n+1\}\\|J|=r,\, \boldsymbol{\lambda}+\boldsymbol{\varepsilon}_J\in\boldsymbol{\Lambda}^{(n)}}} V_J(\boldsymbol{\rho}_\mathrm{g} +\boldsymbol{\lambda}) f(\boldsymbol{\rho}_\mathrm{g} +\boldsymbol{\lambda}+\boldsymbol{\varepsilon}_J) 
\end{equation}
gives rise to commuting difference operators $\boldsymbol{D}_1,\ldots ,\boldsymbol{D}_n$ in the space of lattice functions $f:(\boldsymbol{\rho}_\mathrm{g} +\boldsymbol{\Lambda}^{(n)})\to\mathbb{C}$.

The discrete Ruijsenaars operator $D_r$ \eqref{Dr:a}--\eqref{Dr:c}  boils down to a reformulation of $\boldsymbol{D}_r$ \eqref{Dr:discrete} in terms of partitions, via the bijection 
\begin{equation}\label{bijection}
\boldsymbol{\lambda}=l_1\boldsymbol{\omega}_1+\cdots +l_n\boldsymbol{\omega}_n \leftrightarrow (l_1+\cdots+l_n,l_2+\cdots+l_n,\ldots,l_{n-1}+l_n,l_n) =\lambda
\end{equation}
identifying the dominant weight lattice $\boldsymbol{\Lambda}^{(n)}$ \eqref{dw} with the lattice $\Lambda^{(n)}$  \eqref{partitions:n} of partitions of length at most $n$.
This bijection maps the
action of the discrete Ruijsenaars operator $\boldsymbol{D}_r$ \eqref{Dr:discrete} on $f:(\boldsymbol{\rho}_\mathrm{g} +\boldsymbol{\Lambda}^{(n)})\to\mathbb{C}$ to that
of $D_r$ \eqref{Dr:a}--\eqref{Dr:c} on the lattice function $\lambda\stackrel{f}{\to} f_\lambda$, $\lambda\in\Lambda^{(n)}$
via the dictionary
\begin{equation}
f(\boldsymbol{\rho}_\mathrm{g} +\boldsymbol{\lambda})=f_\lambda\quad\text{and}\quad \theta_j=\begin{cases} 1& \text{if}\  j\in J\\ 0&\text{if}\ j\not\in J 
\end{cases} .
\end{equation}
Indeed, with these identifications one has  that
\begin{equation}
\boldsymbol{\lambda}+\boldsymbol{\varepsilon}_J\leftrightarrow \underline{\mu}  \quad\text{and}\quad V_J(\boldsymbol{\rho}_\mathrm{g} +\boldsymbol{\lambda}) = B_{\mu/\lambda}\quad \text{where}\ \mu-\lambda=\theta .
\end{equation}
The lattice operators  $D_r$ \eqref{Dr:a}--\eqref{Dr:c}  thus inherit the commutativity from $\boldsymbol{D}_r$ \eqref{Dr:discrete}.

\section*{Acknowledgements}
The work of JFvD was supported in part by the {\em Fondo Nacional de Desarrollo
Cient\'{\i}fico y Tecnol\'ogico (FONDECYT)} Grant \# 1210015. TG was supported in part by the NKFIH Grant K134946.

\bigskip\noindent
\parbox{.135\textwidth}{\begin{tikzpicture}[scale=.03]
\fill[fill={rgb,255:red,0;green,51;blue,153}] (-27,-18) rectangle (27,18);  
\pgfmathsetmacro\inr{tan(36)/cos(18)}
\foreach \i in {0,1,...,11} {
\begin{scope}[shift={(30*\i:12)}]
\fill[fill={rgb,255:red,255;green,204;blue,0}] (90:2)
\foreach \x in {0,1,...,4} { -- (90+72*\x:2) -- (126+72*\x:\inr) };
\end{scope}}
\end{tikzpicture}} \parbox{.85\textwidth}{This project has received funding from the European Union's Horizon 2020 research and innovation programme under the Marie Sk{\l}odowska-Curie grant agreement No 795471.}

\bibliographystyle{amsplain}

\end{document}